%%
%% This is file `squelette-rr.tex',
%% generated with the docstrip utility.
%%
%% The original source files were:
%%
%% RR.dtx  (with options: `sample')
%% ********************************************************************
%% Copyright (C) 1997-1999 2004 2006-2011 INRIA/APICS/MARELLE by Jose' Grimm
%% This file may be distributed and/or modified under the
%% conditions of the LaTeX Project Public License, either version 1.3
%% of this license or (at your option) any later version.
%% The latest version of this license is in
%%    http://www.latex-project.org/lppl.txt
%% and version 1.3 or later is part of all distributions of LaTeX
%% version 2003/12/01 or later.
%% An archive of the software can be found at
%%    ftp://ftp-sop.inria.fr/marelle/RR-INRIA

\documentclass[twoside]{article}
\usepackage[a4paper]{geometry}
\usepackage[latin1]{inputenc} 
\usepackage[T1]{fontenc} % ou \usepackage[OT1]{fontenc}
\usepackage{RR}
\usepackage{hyperref}
\usepackage{url}
\usepackage{xspace}

\usepackage{amssymb}
\usepackage{amsmath}
\usepackage{amsthm}

\usepackage{array}
\usepackage{float}
\usepackage{graphicx}
\usepackage{verbatim}%%block comment
\usepackage{algorithm}
\usepackage[noend]{algorithmic}
\usepackage{comment}
\usepackage{soul}
\usepackage{color}

\usepackage{authblk}
\usepackage{longtable}
\usepackage{multirow}

\restylefloat{figure}
\restylefloat{table}

\usepackage{algorithm}
\usepackage[noend]{algorithmic}

\usepackage{graphicx}
\usepackage{color}
\usepackage{algorithm}
\usepackage[noend]{algorithmic}
\usepackage{pdfpages}

%%%%%%%%%%%%%%%%%%%

\usepackage{amsmath}
\usepackage{amssymb}
\usepackage{latexsym}
\usepackage{array}
\usepackage{stmaryrd}
\usepackage{paralist}
\usepackage{graphicx} % allow images
\usepackage{xspace}
\usepackage{url}
\usepackage{tabulary,booktabs,multirow}
\usepackage{algorithm, algorithmic}
\usepackage{tikz}
\usepackage{footnote}
\usepackage{subfig}
\usepackage[shortlabels]{enumitem}
\setlist{nolistsep}

\newcommand{\ie}{\emph{i.e.,}\xspace}
\newcommand{\eg}{\emph{e.g.}\xspace}

\DeclareMathOperator{\vc}{vc}
\DeclareMathOperator{\ov}{over}
\DeclareMathOperator{\enc}{encomp}

\newcommand{\N}{\mathbb{N}}

\renewcommand{\S}{\mathcal{S}}

\newtheorem{theorem}{Theorem}
\newtheorem{corollary}{Corollary}

\newtheorem{lemma}[theorem]{Lemma}
%\newdefinition{definition}{Definition}

\newcommand{\K}{\overline{K}}

\newcommand{\F}{\mathcal{F}}

\newcommand{\fpt}{{\sf FPT}\xspace}
\newcommand{\W}{{\sf W}\xspace}
\newcommand{\np}{{\sf NP}\xspace}
\newcommand{\poly}{{\sf P}\xspace}

\newcommand{\computationalproblem}[3]{
  \vspace{2mm}
\noindent\phantom{ee} {\sc #1}\\[1mm]
\phantom{ee} \parbox{1.5cm}{\underline{Input:}} #2\\
\phantom{ee} \parbox{1.5cm}{\underline{Question:}} #3\\
}

%%i%%%%%%%%%%%%%%%%%%%%%%%%%%%%%%%%%%%%%%%%%%%%%%%%%%%%%%%%%%%%%%%%%%%%%%%%%%%%%%%
\newif\ifLONG
\LONGtrue

%\input{.latex_location_macros.sty}
%
%\input{\wmysty/macros-envs.sty}
%\input{\wmysty/macros-comments.sty}
%\input{\wmysty/macros-code.sty}
%\input{\wmysty/macros-symbols.sty}
%
%\input{\wmysty/macros-algorithms.sty}
%\input{\wmysty/macros-probas-stats-information-theory.sty}
%\input{\wmysty/macros-delaunay-voronoi.sty}
%
%\input{\wmysty/macros-SBL.sty}
%\input{\wmysty/macros-space-filling-models.sty}
%\input{\wmysty/macros-conformational-analysis.sty}
%
%\input{macros-local.sty}
%\renewcommand{\tored}{\color{black}}
%\renewcommand{\toblue}{\color{black}}

%\renewcommand{\tored}{\color{black}}
%\renewcommand{\toblue}{\color{black}}
%%i%%%%%%%%%%%%%%%%%%%%%%%%%%%%%%%%%%%%%%%%%%%%%%%%%%%%%%%%%%%%%%%%%%%%%%%%%%%%%%%

%%\usepackage[frenchb]{babel} % optionnel
\RRNo{9045}
%%
%% date de publication du rapport
\RRdate{Mars 2017}
%%
%% Cas d'une version deux
%%\RRversion{2}
%% date de publication de la version 2
%%\RRdater{August 2012}
%%

%\RRauthor{Nathann Cohen and Fr\'{e}d\'{e}ric Havet and Dorian Mazauric and Ignasi Sau and R\'{e}mi Watrigant}
\RRauthor{Nathann Cohen$^{1}$, Fr\'{e}d\'{e}ric Havet$^{2}$, Dorian Mazauric$^{3}$, Ignasi Sau$^{4,5}$ and R\'{e}mi Watrigant$^{3}$ \\
~ \\
\vspace{0.05cm}
\small{1. Universit\'e Paris-Sud, LRI, CNRS, Orsay, France\\
2. Universit\'e C\^ote d'Azur, CNRS, I3S, Inria, Sophia Antipolis, France. \\
3. Universit\'e C\^ote d'Azur, Inria, Sophia Antipolis, France. \\
4. CNRS, LIRMM, Universit\'e de Montpellier, Montpellier, France. \\
5. Departamento de Matem\'atica, Universidade Federal do Cear\'a, Fortaleza, Brazil. \\
%~\\
%\email{email}
}}

%%
%% Ceci apparait sur chaque page paire.
\authorhead{Cohen / Havet / Mazauric / Sau / Watrigant}
%% French title
\RRtitle{Dichotomies de complexit\'e pour le probl\`eme \textsc{Minimum $\F$-Overlay}}
%%
%% English title
\RRetitle{Complexity Dichotomies for the \textsc{Minimum $\F$-Overlay} Problem\footnote{This work was partially funded by `Projet de Recherche Exploratoire', Inria, \textit{Improving inference algorithms for macromolecular structure determination} and ANR under contract STINT ANR-13-BS02-0007.}}
%%Running title 
\titlehead{Complexity Dichotomies for the \textsc{Minimum $\F$-Overlay} Problem}
\RRresume{Pour une famille (possiblement infinie) de graphes fix\'es $\F$, un graphe $G$ \emph{couvre} $\F$ dans un hypergraphe $H$ si $V(H)$ est \'egal \`a $V(G)$ et le sous-graphe de $G$ induit par chaque hyperar\^ete de $H$ contient un membre de $\F$ comme sous-graphe couvrant.
Il est facile d'observer que le graphe complet sur $|V(H)|$ couvre $\F$ dans un hypergraphe $H$ d\`es lors qu'il y a une solution, le probl\`eme \textsc{Minimum $\F$-Overlay} consiste alors \`a calculer un tel graphe avec le nombre minimum d'ar\^etes.
Ce probl\`eme g\'en\'eralise certains autres qui ont une application en pratique. Par exemple, si la famille $\F$ contient tous les graphes connexes, alors \textsc{Minimum $\F$-Overlay} correspond au probl\`eme \textsc{Minimum Connectivity Inference} (aussi connu comme le probl\`eme \textsc{Subset Interconnection Design}) introduit pour la reconstruction basse-r\'esolution d'un assemblage macro-mol\'eculaire en biologie structurale, ou pour la conception de r\'eseaux.

Notre principale contribution est un r\'esultat de dichotomie forte concernant le statut polynomial versus \np-difficile par rapport \`a la famille $\F$ consid\'er\'e. En termes simples, nous montrons que les cas faciles (\textit{e.g.} quand les graphes sans ar\^ete de bonnes tailles sont tous dans $\F$ ou si $\F$ contient seulement des cliques) sont les seules familles qui rendent le probl\`eme polynomial \`a r\'esoudre : tous les autres sont  \np-complets.
Nous analysons ensuite la complexit\'e param\'etr\'ee du probl\`eme et prouvons des conditions suffisantes sur $\F$ qui montrent que le probl\`eme est $\W[1]$-hard, $\W[2]$-hard ou $\fpt$ quand le param\`etre est la taille de la solution.
Cela donne une dichotomie \fpt/$\W[1]$-difficile pour une version rel\^ach\'ee du probl\`eme pour laquelle chaque hyperar\^ete de $H$ doit contenir un membre de $\F$ comme sous-graphe (pas n\'ecessairement couvrant).
}

\RRabstract{
 For a (possibly infinite) fixed family of graphs $\F$, we say that a graph $G$ \emph{overlays} $\F$ on a hypergraph $H$ if $V(H)$ is equal to $V(G)$ and the subgraph of $G$ induced by every hyperedge of $H$ contains some member of $\F$ as a spanning subgraph.
  While it is easy to see that the complete graph on $|V(H)|$ overlays $\F$ on a hypergraph $H$ whenever the problem admits a solution, the \textsc{Minimum $\F$-Overlay} problem asks for such a graph with the minimum number of edges.
This problem allows to generalize some natural problems which may arise in practice. For instance, if the family $\F$ contains all connected graphs, then \textsc{Minimum $\F$-Overlay} corresponds to the \textsc{Minimum Connectivity Inference} problem (also known as \textsc{Subset Interconnection Design} problem) introduced for the low-resolution reconstruction of macro-molecular assembly in structural biology, or for the design of networks.

Our main contribution is a strong dichotomy result regarding the polynomial vs. \np-hard status with respect to the considered family $\F$. Roughly speaking, we show that the easy cases one can think of (\textit{e.g.} when edgeless graphs of the right sizes are in $\F$, or if $\F$ contains only cliques) are the only families giving rise to a polynomial problem: all others are \np-complete.
We then investigate the parameterized complexity of the problem and give similar sufficient conditions on $\F$ that give rise to $\W[1]$-hard, $\W[2]$-hard or $\fpt$ problems when the parameter is the size of the solution.
%We also prove $W[2]$-hardness results for some families (\textit{e.g.} when all the graphs of the family have the same number of non-isolated vertices).
 This yields an \fpt/$\W[1]$-hard dichotomy for a relaxed problem, where every hyperedge of $H$ must contain some member of $\F$ as a (non necessarily spanning) subgraph.
}
\RRmotcle{Hypergraphe, Probl\`eme Minimum $\F$-Overlay, \np-compl\'etude, Complexit\'e param\'etr\'e}

\RRkeyword{Hypergraph, Minimum $\F$-Overlay Problem, \np-completeness, Fixed-parameter tractability}

\RRprojet{ABS, COATI}  % cas d'un seul projet

%%
%% \URLorraine % pour ceux qui sont \`a l'est
%% \URRennes  % pour ceux qui sont \`a l'ouest
%% \URRhoneAlpes % pour ceux qui sont dans les montagnes
%% \URRocq % pour ceux qui sont au centre de la France
%% \URFuturs % pour ceux qui sont dans le virtuel
%% \URSophia % pour ceux qui sont au Sud.
%%
%% \RCBordeaux % centre de recherche Bordeaux - Sud Ouest
%% \RCLille % centre de recherche Lille Nord Europe
%% \RCParis % Paris Rocquencourt
%% \RCSaclay % Saclay \^Ile de France
%% \RCGrenoble % Grenoble - Rh\^one-Alpes
%% \RCNancy % Nancy - Grand Est
%% \RCRennes % Rennes - Bretagne Atlantique

\RCSophia % Sophia Antipolis M\'editerran\'ee

\begin{document}

\makeRR   % cas d'un rapport de recherche
%\makeRT % cas d'un rapport technique.
%% a partir d'ici, chacun fait comme il le souhaite

\tableofcontents
\newpage

\section{Introduction}

%\ig{j'ai mis des macros pour $\np, \poly, \fpt, \W[i]$}

\subsection{Notation}

Most notations of this paper are standard. We now recall some of them, and we refer the reader to~\cite{Die12} for any undefined terminology. For a graph $G$, we denote by $V(G)$ and $E(G)$ its respective sets of \emph{vertices} and \emph{edges}. The \emph{order} of a graph $G$ is $|V(G)|$, while its \emph{size} is $|E(G)|$. By extension, for a hypergraph $H$, we denote by $V(H)$ and $E(H)$ its respective sets of \emph{vertices} and \emph{hyperedges}. For $p \in \N$, a  {\it $p$-uniform} hypergraph $H$ is a hypergraph such that $|S|=p$ for every $S \in E(H)$. Given a graph $G$, we say that a graph $G'$ is a \emph{subgraph} of $G$ if $V(G') \subseteq V(G)$ and $E(G') \subseteq E(G)$. We say that $G'$ is a \emph{spanning subgraph} of $G$ if it is a subgraph of $G$ such that $V(G') = V(G)$. Given $S \subseteq V(G)$, we denote by $G[S]$ the graph with vertex set $S$ and edge set $\{uv \in E(G) \mid u, v \in S\}$. We say that a graph $G'$ is an \emph{induced subgraph} of $G$ if there exists $S \subseteq V(G)$ such that $G' = G[S]$. Given $S \subseteq V(G)$, we say that an edge $uv \in E(G)$ is \emph{covered} by $S$ if $u \in S$ or $v \in S$, and we say that $uv \in E(G)$ is \emph{induced} by $S$ if $\{u,v\}\subseteq S$. An \emph{isolated} vertex of a graph is a vertex of degree $0$. Finally, for a positive integer $p$, let $[p] = \{1, \dots, p\}$.

\subsection{Definition of the \textsc{Minimum $\F$-Overlay} problem}

We define the problem investigated in this paper: \textsc{Minimum $\F$-Overlay}.
Given a fixed family of graphs ${\cal F}$ and an input hypergraph $H$, we say that a graph $G$ {\it overlays} ${\cal F}$ on $H$ if $V(G) = V(H)$ and for every hyperedge $S \in E(H)$, the subgraph of $G$ induced by $S$, $G[ S]$, has a spanning subgraph in ${\cal F}$.

Observe that if a graph $G$ overlays ${\cal F}$ on $H$, then the graph $G$ with any additional edges overlays ${\cal F}$ on $H$.
Thus, there exists a graph $G$ overlaying ${\cal F}$ on $H$ if and only if the complete graph on $|V(H)|$ vertices overlays ${\cal F}$ on $H$.
Note that the complete graph on $|V(H)|$ vertices overlays ${\cal F}$ on $H$ if and only if for every hyperedge $S \in E(H)$, there exists a graph in ${\cal F}$ with exactly $|S|$ vertices.
It implies that deciding whether there exists a graph $G$ overlaying $\F$ on $H$ can be done in polynomial time. Hence, otherwise stated, we will always assume that there exists a graph overlaying $\F$ on our input hypergraph $H$.
We thus focus on minimizing the number of edges of a graph overlaying ${\cal F}$ on $H$.

The {\it ${\cal F}$-overlay number} of a hypergraph $H$, denoted $\ov_{\cal F}(H)$, is the smallest size (\ie number of edges) of a graph overlaying ${\cal F}$ on $H$.

\computationalproblem{Minimum $\cal F$-Overlay}
{A hypergraph $H$, and an integer $k$.}
{$\ov_{F}(H) \leq k$?}

%We also investigate a relaxed version of the problem. We say that a graph $G$ \emph{encompasses} a graph family $\F$ on a hypergraph $H$ if for every hyperedge $S \in E(H)$, the graph $G[ S]$ contains a (non necessarily spanning) subgraph in ${\cal F}$.

We also investigate a relaxed version of the problem, called \textsc{Minimum $\F$-Encompass} where we ask for a graph $G$ such that for every hyperedge $S \in E(H)$, the graph $G[ S]$ contains a (non necessarily spanning) subgraph in ${\cal F}$.
In an analogous way, we define the {\it ${\cal F}$-encompass number}, denoted $\enc_{\cal F}(H)$, of a hypergraph $H$.
%, denoted $\rov_{\cal F}(H)$, is the smallest size (i.e. number of edges) of a graph overlaying ${\cal F}$ on $H$.

\computationalproblem{Minimum ${\cal F}$-Encompass}
{A hypergraph $H$, and an integer $k$.}
{$\enc_{F}(H) \leq k$?}

Observe that the {\sc Minimum ${\cal F}$-Encompass} problems are particular cases of {\sc Minimum ${\cal F}$-Overlay} problems.
Indeed, for a family ${\cal F}$ of graphs, let $\tilde{{\cal F}}$ be the family of graphs containing an element of ${\cal F}$ as a subgraph.
Then {\sc Minimum ${\cal F}$-Encompass} is exactly {\sc Minimum $\tilde{\cal F}$-Overlay}.

%\FH{Ne devrait-on pas trouver un autre nom que relaxed overlay, par exemple encompass,
%pour pouvoir dire $G$ encompass ${\cal F}$ on $H$}
%\remarkRW{Pourquoi pas} \\
%\remarkRW{R\'emi, Dorian : garder relaxed mais utiliser le verbe encompass ?}\\

Throughout the paper, we will only consider graph families $\F$ whose \textsc{$\F$-Recognition} problem\footnote{The \textsc{$\F$-Recognition} problem asks, given a graph $F$, whether $F \in \F$.} is in $\np$. This assumption implies that \textsc{Minimum $\F$-Overlay} and \textsc{Minimum $\F$-Encompass} are in $\np$ as well (indeed, a certificate for both problems is simply a certificate of the recognition problem for every hyperedge). In particular, it is not necessary for the recognition problem to be in $\poly$ as it can be observed from the family $\F_{Ham}$ of Hamiltonian graphs: the \textsc{$\F$-Recognition} problem is $\np$-hard, but providing a spanning cycle for every hyperedge is a polynomial certificate and thus belongs to $\np$.

\subsection{Related work and applications}

%Henceforth steady properties are of particular interest.
\textsc{Minimum $\F$-Overlay} allows us to model lots of interesting combinatorial optimization problems of practical interest, as we proceed to discuss.

Common graph families $\F$ are the following:
connected graphs (and more generally, $\ell$-connected graphs), Hamiltonian graphs, graphs having a universal vertex (\ie having a vertex adjacent to every other vertex). When the family is the set of all connected graphs, then the problem is known as
{\sc Subset Interconnection Design}, {\sc Minimum Topic-Connected Overlay} or {\sc Interconnection Graph Problem}. It has been studied by several communities in the context of designing vacuum systems \cite{DK95,DM88}, scalable overlay networks \cite{CMT+07,HHI+12,OnRi11}, reconfigurable interconnection
networks~\cite{FHW08,FanWu08}, and, in variants, in the context of inferring a most
likely social network \cite{AAR10}, determining winners of combinatorial auctions \cite{CDS04}, as well as drawing hypergraphs \cite{BCP+11,KMN14,JoPo87,KoSt03}.

%This problem has been studied in different contexts and
%for different variants.
%For instance in~\cite{Mandas15},
%every complex must induce an induced
%tree, path, or star.
As an illustration, we explain in detail the importance of such inference problems for fundamental questions on structural biology~\cite{ACCC15}.
A major problem is the characterization of low resolution structures of macro-molecular assemblies.
To attack this very difficult question, one has to determine the
plausible contacts between the subunits of an assembly, given the lists of subunits
involved in all the complexes.
We assume that the composition, in terms of individual subunits, of
selected complexes is known.
Indeed, a given assembly can be chemically
split into complexes by manipulating chemical conditions.
This problem can be formulated as a \textsc{Minimum $\F$-Overlay} problem, where vertices represent the subunits and hyperedges are the complexes. In this setting, an edge between two vertices represents a contact between two subunits.

\vspace{0.06cm}
\noindent
\begin{minipage}[c]{.6\linewidth}
Hence, the considered family $\F$ is the family of all trees: we want the complexes to be connected. 
%\ig{avant c'\'etait \'ecrit: ``we want the complexes to form a connected component''} %form a connected component.
%The problem consists in finding a smallest set of edges satisfying some properties on complexes.
Note that the minimal connectivity assumption avoids speculating on the exact
(unknown) number of contacts.
Indeed, due to volume exclusion constraints, a given subunit cannot
contact many others.
%~\cite{aal-pdama-07,taverner2008subunit}.
The figure depicts a simple assembly
composed of four complexes (hyperedges) and an optimal solution.
We can also add some other constraints to the family such as `bounded maximum degree': a subunit (e.g. a protein) cannot be connected to many other subunits (vertices).
\end{minipage}
\hspace{0.185cm}
\begin{minipage}[c]{.36\linewidth}
\vspace{-0.13cm}
\includegraphics[width=1\textwidth]{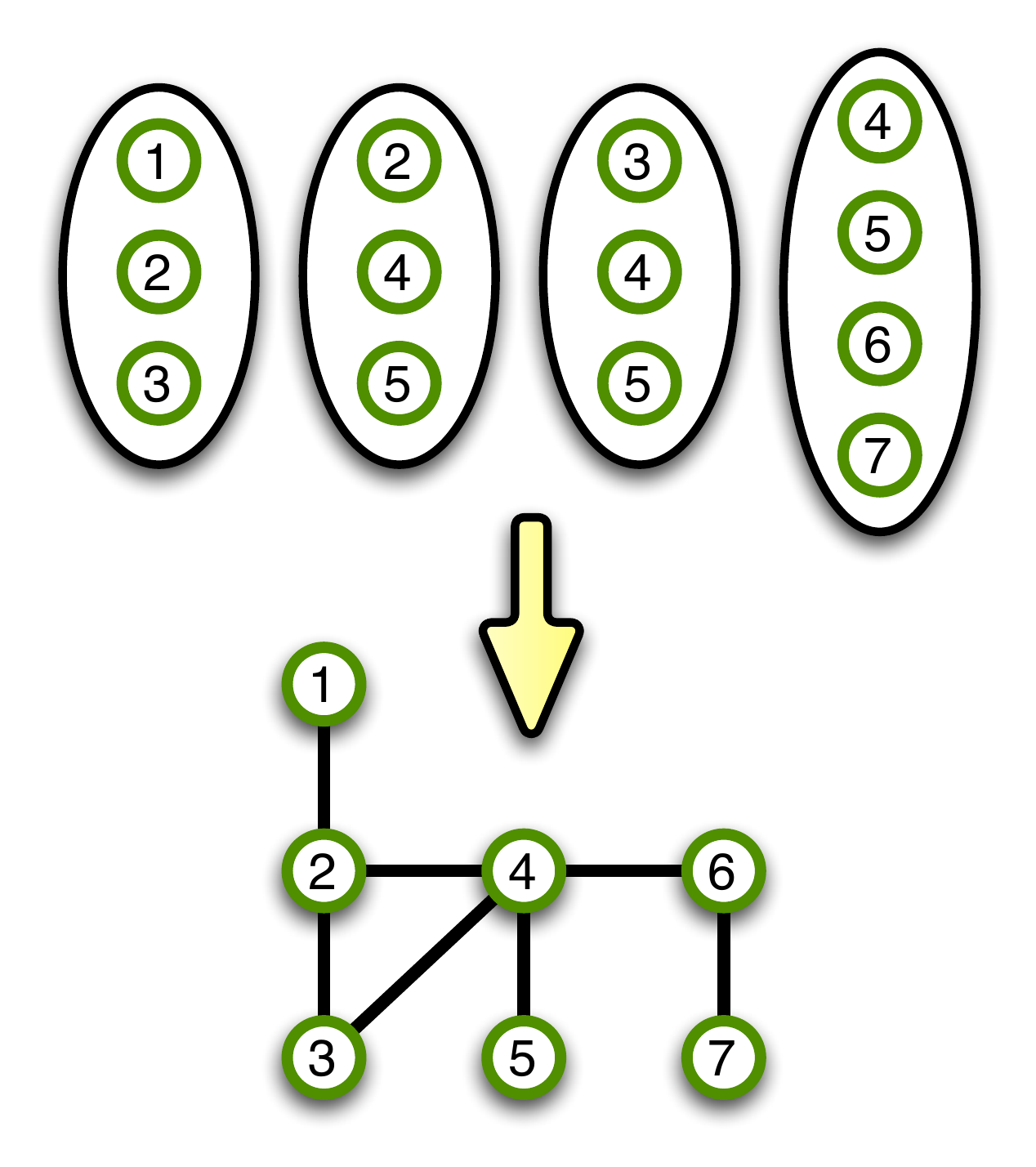}
\end{minipage}

%\begin{figure}
%\begin{center}[t]
%\includegraphics[scale=0.5]{Figures/inference-exemple.pdf}
%\caption{I.}
%\label{fig:exemple}
%\end{center}
%\end{figure}

%Observe that a property $\mathbb{P}$ may be seen has a class of graphs, i.e. the class of graphs having property $\mathbb{P}$.
%Henceforth, with a slight abuse of notation, we shall also denote by $\mathbb{P}$ this class of graphs.
%%Moreover, every steady property $\mathbb{P}$ is completely characterized by the set ${\cal F}_{\mathbb{P}}$ of its {\it minimal} elements, i.e. the graphs $G$ having property $\mathbb{P}$ such that no spanning strict subgraph has property $\mathbb{P}$.
%%Observe that a graph $G$ enforces $\mathbb{P}$ on $H$ if and only if for every hyperedge $S\in E(H)$, the subgraph $G[ S]$ has a spanning subgraph in ${\cal F}_{\mathbb{P}}$.
%sThis leads to the following definitions and problem.

% $\F$ be a fixed set of graphs, and $H$ be a hypergraph.
%We say that a graph $G$ \emph{overlays} $\F$ on a hypergraph $H$ if $V(G)=V(H)$ and, for every hyperedge $S \in E(H)$, there exists $F \in \F$ such that $F$ is a spanning subgraph of $G[S]$. Observe that there always exists a graph $G$ which overlays any set $\F$ on any hypergraph $H$ as long as for every $S \in E(H)$, there exists $F \in \F$ such that $|S|=|V(F)|$ (take $G$ as the complete graph on $V(H)$ vertices). Thus, we focus on a solution $G$ which optimizes a certain criterion, such as the minimization of its number of edges. We have the following problem:\\

\subsection{Our contributions}

In Section~\ref{sec:NP}, we prove a strong dichotomy result regarding the polynomial vs. \np-hard status with respect to the considered family $\F$. Roughly speaking, we show that the easy cases one can think of (\textit{e.g.} containing only edgeless and complete graphs) are the only families giving rise to a polynomial problem: all others are $\np$-complete. In particular, it implies that the \textsc{Minimum Connectivity Inference} problem is $\np$-hard in $p$-uniform hypergraphs, which generalizes previous results.
In Section~\ref{sec:FPT}, we then investigate the parameterized complexity of the problem and give similar sufficient conditions on $\F$ that gives rise to $\W[1]$-hard, $\W[2]$-hard or $\fpt$ problems.
%We also prove $\W[2]$-hardness results for some families (\textit{e.g.} when all the graphs of the family have the same number of non-isolated vertices).
 This yields an \fpt/$\W[1]$-hard dichotomy for \textsc{Minimum $\F$-Encompass}.

%Due to space restrictions, some proofs were omitted and can be found in the long version of the paper. \remarkRW{TO DO}

\section{Complexity dichotomy}
\label{sec:NP}

In this section, we prove a dichotomy between families of graphs ${\cal F}$ such that
 \textsc{Minimum $\F$-Overlay} is polynomial-time solvable, and families of graphs ${\cal F}$ such that
 \textsc{Minimum $\F$-Overlay} is \np-complete.

Given a family of graphs $\F$ and a positive integer $p$, let $\F_p = \{F \in \F: |V(F)|=p\}$.
We denote by $K_p$ the complete graph on $p$ vertices, and by $\overline{K_p}$ the edgeless graph on $p$ vertices.

\begin{theorem}\label{thm:main}
Let $\F$ be a family of graphs. If, for every $p > 0$, either $\F_p = \emptyset$ or $\F_p = \{K_p\}$ or $\overline{K_p} \in \F_p$, then \textsc{Minimum $\F$-Overlay} is polynomial-time solvable. Otherwise, it is $\np$-complete.
\end{theorem}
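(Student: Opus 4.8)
The plan is to treat the two directions separately, spending the bulk of the effort on the hardness side.

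For the polynomial direction, first note that under our standing assumption a solution exists, so no hyperedge $S$ has $\F_{|S|}=\emptyset$. I would then classify each hyperedge $S$ by $p=|S|$. If $\overline{K_p}\in\F_p$, the constraint on $S$ is vacuous, since the edgeless graph is a spanning subgraph of $G[S]$ whatever $G$ is; and if $\F_p=\{K_p\}$, then $G[S]$ must have $K_p$ as a spanning subgraph, i.e.\ every pair inside $S$ must be an edge. Hence an edge is \emph{forced} precisely when both its endpoints lie together in a hyperedge of the second type, and I would let $G_0$ be the graph consisting of exactly these forced edges. Then $G_0$ overlays $\F$ on $H$, every overlay must contain all forced edges, so $\ov_\F(H)=|E(G_0)|$, computable in polynomial time.

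Membership in $\np$ is already granted by the standing hypothesis that \textsc{$\F$-Recognition} is in $\np$. For hardness, fix a $p$ violating the condition, so $\F_p\ne\emptyset$, $\F_p\ne\{K_p\}$ and $\overline{K_p}\notin\F_p$; in particular there is some $F\in\F_p$ with $F\ne K_p$ and $F\ne\overline{K_p}$, i.e.\ with at least one edge and at least one non-edge. I would restrict attention to $p$-uniform input hypergraphs, so that only $\F_p$ is relevant, and replace $\F_p$ by its upward closure $\mathcal{P}=\F_p^{\uparrow}$ (the $p$-vertex graphs containing a spanning member of $\F_p$): the overlay constraint on a hyperedge $S$ becomes exactly ``$G[S]\in\mathcal{P}$'', and $\mathcal{P}$ is a monotone property with $\overline{K_p}\notin\mathcal{P}$, $K_p\in\mathcal{P}$, and some non-complete graph in $\mathcal{P}$. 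The engine of the reduction is a \emph{minimal} element $F_0\in\mathcal{P}$ with $F_0\notin\{\overline{K_p},K_p\}$, which exists because some non-complete graph lies in $\mathcal{P}$ and one can descend to a minimal element below it; thus $F_0$ has an edge $e_0$ with $F_0-e_0\notin\mathcal{P}$ together with a non-edge. I would reduce from a suitable $\np$-complete problem (of independent-set / hitting-set type) by building, for each ``choice'' of the source instance, a template copy of $F_0$ on fresh gadget vertices whose background edges are pinned down and whose pair $e_0$ acts as a binary switch --- present ($G[S]\in\mathcal{P}$ via $F_0$) versus absent ($G[S]\notin\mathcal{P}$, by minimality) --- and linking these switches through constraint hyperedges derived from the source instance, calibrating $k$ so that yes-instances are exactly those of overlay number at most $k$.

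The main obstacle is this last step. Since the overlay objective only imposes lower bounds --- it can force edges to be present, but never forbids edges --- and since $\mathcal{P}$ may possess many minimal members besides $F_0$, pinning down the background of each template and guaranteeing that an optimal solution realizes precisely the intended binary switches is delicate. I expect this to use the minimality of $F_0$ in an essential way, together with a careful edge-counting argument and a case distinction according to whether $K_p\in\F_p$: when $K_p\notin\F_p$ no hyperedge can be ``completed'' for free, so the forcing gadgets must be engineered differently from the case where completing a hyperedge to $K_p$ is always an admissible (if costly) fallback.
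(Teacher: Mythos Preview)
Your polynomial direction is correct and essentially identical to the paper's. For the hardness direction, however, the outline has a genuine gap precisely at the point you flag as ``the main obstacle''. A single binary switch $e_0$ on a minimal $F_0\in\mathcal P$ is not enough structure to carry a reduction: to encode a Vertex-Cover-type constraint (``at least one of two choices'') inside one $p$-hyperedge you need a background graph $J$ with $J\notin\mathcal P$ and \emph{two} non-edges $e_1,e_2$ such that $J\cup e_i\in\mathcal P$ for $i=1,2$; minimality of $F_0$ only hands you one such non-edge, and in general no second one exists on $F_0-e_0$. You also have no mechanism to pin down the background edges of $J$ in each gadget hyperedge, and the overlay objective alone cannot force presence of specific edges. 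Finally, your proposed case split on whether $K_p\in\F_p$ is spurious: the overlay condition depends only on the upward closure $\mathcal P$, and $K_p\in\mathcal P$ always, so completing a hyperedge to a clique is an admissible fallback in every case.

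The paper supplies exactly the three missing ingredients. First, it proves that a \emph{prescribed} variant (where a given edge set $E$ must be contained in the solution) is polynomially equivalent to the unprescribed problem, via a small gadget per prescribed edge; all subsequent reductions take place in the prescribed variant, which solves the ``pinning down'' issue. Second, it isolates the two-switch configuration above as the notion of a \emph{hard} family and gives a direct Vertex Cover reduction for hard $\F_p$. Third, it shows every admissible $\F_p$ leads to a hard situation, not by examining a single minimal $F_0$ but by induction on $p$: one passes to the family $\F^-$ of hypographs (order-$(p-1)$ induced subgraphs) of members of $\F_p$; either $\F^-$ is again admissible and one lifts $(p-1)$-uniform instances to $p$-uniform ones by adjoining a fresh universal vertex to each hyperedge, or $\K_{p-1}\in\F^-$, which forces $\F_p$ to contain a nonempty subgraph of the star $S_p$, and a separate structural lemma then exhibits the required $(J,e_1,e_2)$.
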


%\DM{Mettre Lemme pour certains theoremes ??}

Let us first prove the first part of this theorem.
%\nathann{My two cents: From the look of the theorem statements, I believe that the subgraphs should not be required to be 'spanning' in the definition of overlays. Because of this, the theorems are not about the family itself but about each of its layers $\F_1,\F2$,etc.}

\begin{theorem}\label{thm:poly}
Let $\F$ be a set of graphs. If, for every $p > 0$, either $\F_p = \emptyset$ or $\F_p = \{K_p\}$ or $\overline{K_p} \in \F_p$, then \textsc{Minimum $\F$-Overlay} is polynomial-time solvable.
\end{theorem}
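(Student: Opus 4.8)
The plan is to show that, under the stated hypothesis, the problem \emph{trivialises}: the set of edges that must appear in any overlay is entirely forced, and taking exactly those edges already yields a valid overlay, so no genuine optimisation is needed and $\ov_{\F}(H)$ can be read off directly. First I would classify every hyperedge $S \in E(H)$ according to its size $p = |S|$ and the shape of $\F_p$, which by hypothesis falls into exactly one of three cases.

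The three cases are then analysed separately. If $\F_p = \emptyset$, then no graph on $p$ vertices belongs to $\F$, so a hyperedge of size $p$ could never be satisfied; since we work under the standing assumption that $H$ admits an overlay, $H$ simply has no hyperedge of size $p$, and this case contributes nothing. If $\overline{K_p} \in \F_p$, then the edgeless graph $\overline{K_p}$ is a spanning subgraph of $G[S]$ for \emph{every} graph $G$, so such a hyperedge imposes no constraint at all. Finally, if $\F_p = \{K_p\}$, then $G[S]$ must admit $K_p$ as a spanning subgraph, which forces $G[S] = K_p$, \ie every pair of vertices inside $S$ must be an edge of $G$.

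Let $P$ be the set of sizes $p$ for which $\F_p = \{K_p\}$, and let $E^\star$ be the set of pairs $\{u,v\}$ with $u \neq v$ that are contained in some hyperedge $S \in E(H)$ with $|S| \in P$. I would then argue that the graph $G^\star$ with $V(G^\star) = V(H)$ and $E(G^\star) = E^\star$ is an optimal overlay. On the one hand, every overlay must contain all edges of $E^\star$ by the analysis of the complete case, giving the lower bound $\ov_{\F}(H) \geq |E^\star|$. On the other hand, $G^\star$ satisfies every hyperedge (the complete-type ones by construction, the edgeless-type ones vacuously, and there being none of empty-type), so it is feasible and $\ov_{\F}(H) = |E^\star|$. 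The algorithm then just computes $E^\star$ by scanning the hyperedges whose size lies in $P$ and enumerating their internal pairs, and compares $|E^\star|$ with $k$; this runs in time polynomial in the size of $H$.

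The only point requiring care --- and the step I would single out as the main obstacle --- is deciding, for each size $p$ actually occurring in $H$, which of the three cases applies, since this is a question about the fixed family $\F$ rather than about the input. As $\F$ is fixed and the relevant sizes range over at most $|V(H)|$ values, I would treat the membership tests ``$\F_p = \emptyset$?'', ``$\overline{K_p} \in \F_p$?'' and ``$\F_p = \{K_p\}$?'' as determined by $\F$; verifying that this classification is available in polynomial time, and that the three cases are exhaustive and correctly capture the spanning-subgraph semantics, is what turns the otherwise elementary counting argument into a rigorous proof.
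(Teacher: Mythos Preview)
Your proposal is correct and follows essentially the same approach as the paper: partition hyperedge sizes into the three cases, observe that only the $\F_p = \{K_p\}$ case forces any edges, take the union of all cliques on those hyperedges as the optimal solution, and compare its size to $k$. Your write-up is more detailed (you justify optimality explicitly via matching lower and upper bounds, and you flag the issue of classifying each $p$, which the paper leaves implicit since $\F$ is fixed), but the underlying argument is identical.
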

\begin{proof}
Let $I_0, I_1$, and $I_2$ be the sets of positive integers  $p$ such that, respectively, ${\cal F}_p = \emptyset$, $\K_p\in {\cal F}_p$, and
${\cal F}_p=\{K_p\}$.
The following trivial algorithm solves {\sc Minimum ${\cal F}$-Overlay} in polynomial time.
Let $H$ be a hypergraph.
If it contains a hyperedge whose size is in $I_0$, return `No'. If not, then for every hyperedge $S$ whose size is in $I_2$, add the ${|S| \choose 2}$ edges with endvertices in $S$.
If the number of edges of the resulting graph (which is a minimum solution) is at most $k$, return `Yes'. Otherwise return `No'.
\end{proof}

%\ig{j'ai remplac\'e \textsc{min-$\F$-Overlay} par \textsc{Minimum $\F$-Overlay}, car la premi\`ere notation n'\'etait pas d\'efinie}

The \np-complete part requires more work. We need to prove that if there exists $p > 0$ such that $\F_p \neq \emptyset$, $\F_p \neq \{K_p\}$, and $\overline{K_p} \notin \F_p$, then \textsc{Minimum $\F$-Overlay} is $\np$-complete. Actually, it is sufficient to prove the following:
\begin{theorem}\label{thm:gen-1}
Let $p > 0$, and $\F_p$ be a non-empty set of graphs with $p$ vertices such that $\F_p \neq \{K_p\}$ and $\overline{K_p} \notin \F_p$. Then \textsc{Minimum $\F_p$-Overlay} is $\np$-complete (when restricted to $p$-uniform hypergraphs).
\end{theorem}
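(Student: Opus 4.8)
Since membership in \np\ is already guaranteed by the standing assumption that \textsc{$\F$-Recognition} is in \np, it remains to prove \np-hardness. First I would fix a member $F \in \F_p$ with the minimum number of edges and set $t = |E(F)|$. Because $\overline{K_p} \notin \F_p$ we have $t \ge 1$, and because $\F_p \ne \{K_p\}$ the minimum-edge member cannot be $K_p$ (otherwise every member would have $\binom{p}{2}$ edges and equal $K_p$), so $F \ne K_p$ and $1 \le t \le \binom{p}{2}-1$; in particular $p \ge 3$. The single most useful structural fact is that $F$ is edge-minimal in the subgraph order: for every edge $e \in E(F)$, the graph $F - e$ has only $t-1 < t$ edges and hence contains no member of $\F_p$ as a spanning subgraph. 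Equivalently, on a hyperedge whose induced graph is forced to contain $E(F) \setminus \{e\}$ and nothing that is not ``paid for'', satisfaction cannot be achieved for free: at least one further pair must be bought. This is exactly the lever that lets edges be forced. I would also record a non-edge $ab \notin E(F)$ (it exists since $F \ne K_p$), to be used as a ``free slot'' that may stay absent without harming satisfaction.

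Second, I would reduce from a suitable \np-hard problem, namely a bounded-occurrence variant of \textsc{3-SAT} (or, equivalently, a structured covering problem), chosen so that each gadget touches only a constant number of vertices, which is essential here because every hyperedge must have exactly $p$ vertices. The reduction has two ingredients. A \emph{forcing gadget} installs a prescribed base graph on a block of vertices: by overlaying several copies of $F$ that pairwise share all but one carefully chosen pair, and using edge-minimality of $F$, any optimal overlay is driven to contain a fixed set of skeleton edges. A \emph{choice gadget} offers, on top of the skeleton, a binary decision: a hyperedge that can be completed to a member of $\F_p$ in exactly two minimal ways, the two completions being interpreted as the truth value of a variable, while the free-slot non-edge $ab$ guarantees that leaving the unused option empty costs nothing. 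Clause hyperedges are then wired, through chains of overlapping size-$p$ hyperedges that transmit a chosen edge from a variable block to a clause block, so that a clause hyperedge becomes satisfiable for free precisely when at least one of its literal edges has been selected. The target budget $k$ is set to the total skeleton cost, plus one unit per variable, plus the unavoidable clause cost.

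Third, the correctness argument splits into the routine direction, where a satisfying assignment yields an overlay meeting the budget by selecting the corresponding literal edges, and the delicate direction, where I must show that any overlay of size at most $k$ induces a satisfying assignment. This amounts to an exact edge-count lower bound: every forcing block must pay its full skeleton cost, every choice block must pay at least one extra unit, and these costs cannot be amortized across blocks. I expect the main obstacle to lie precisely here, and it is entirely a consequence of $\F_p$ being an arbitrary fixed family rather than a single nice graph: I must rule out ``unexpected'' savings in which some other member of $\F_p$ (not $F$) satisfies a hyperedge with fewer or differently placed edges, thereby collapsing the intended accounting. Controlling this uniformly is the crux; I anticipate handling it by choosing the shared and forced pairs of each gadget so that, within every hyperedge, the only edge sets of size at most the local budget that contain a member of $\F_p$ are the intended ones. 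This in turn likely requires a case analysis on the edge-minimal members of $\F_p$ (for instance, according to whether $F$ has isolated vertices, or whether a single pair can be swapped to reach another member). Once this uniform ``no cheap surprise'' property is established, the budget accounting closes and the equivalence with the source instance follows.
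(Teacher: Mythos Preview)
Your outline is in the right spirit---forcing a skeleton of edges and then offering a binary choice on top of it is exactly the mechanism the paper uses---but the proposal has a genuine gap at precisely the point you yourself flag as ``the crux''.

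Your choice gadget requires, for the given $\F_p$, a graph $J$ on $p$ vertices such that $J$ contains no member of $\F_p$ as a spanning subgraph, yet $J\cup e_1$ and $J\cup e_2$ both do, for two \emph{distinct} non-edges $e_1,e_2$. Your edge-minimal $F$ and the observation that $F-e$ has no member of $\F_p$ give you \emph{one} completion of $J=F-e$ (namely $e$), but you never establish a second one. Without two completions there is no binary choice and the reduction collapses. Your free-slot non-edge $ab$ does not help here: it lets you leave a pair empty, but it does not provide an alternative way to satisfy the hyperedge. The ``case analysis on the edge-minimal members of $\F_p$'' that you anticipate is exactly the hard part, and nothing in the proposal indicates how it would go; for arbitrary $\F_p$ it is not at all clear that a two-way-completable $J$ exists by looking only at $F$ and its local edge swaps.

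For comparison, the paper isolates this two-way-completion property as a definition (a ``hard'' set, with a \emph{hyperedge graph} $J$ and two \emph{shifting non-edges}), proves in one stroke that hardness of $\F_p$ implies \np-completeness via a reduction from \textsc{Vertex Cover}, and then---crucially---does \emph{not} try to show every valid $\F_p$ is hard directly. Instead it inducts on $p$: if the family $\F^-$ of one-vertex-deleted subgraphs of members of $\F_p$ still avoids $\overline{K_{p-1}}$ and is not $\{K_{p-1}\}$, the induction hypothesis applies to $\F^-$ and lifts to $\F_p$ by adding a fresh universal vertex to every hyperedge; only in the remaining case (some member of $\F_p$ is a subgraph of the star $S_p$) does the paper prove hardness directly, and even that case needs a nontrivial chain of auxiliary graphs. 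The induction is what lets the argument handle families whose minimal members are, say, a clique plus isolated vertices, where no single edge swap on $F-e$ yields another member. Your plan omits any analogue of this step, and the direct case analysis you propose would have to reproduce it in some form.
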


\subsection{Prescribing some edges}
%%%%%%%%%%%%%%%

A natural generalization of  {\sc Minimum $\cal F$-Overlay}  is to prescribe a set $E$ of edges to be in the
graph overlaying $\cal F$ on $H$.
We denote by $\ov_{\cal F}(H; E)$ the minimum number of edges of a graph $G$ overlaying $\cal F$ on $H$ with $E\subseteq E(G)$.

\smallskip

\computationalproblem{Prescribed Minimum $\cal F$-Overlay}
{A hypergraph $H$, an integer $k$, and a set $E\subseteq {V(H) \choose 2}$.}
{$\ov_{\cal F}(H;E) \leq k$?}

In fact, in terms of computational complexity, the two problems {\sc Minimum $\cal F$-Overlay} and {\sc Prescribed Minimum $\cal F$-Overlay} are equivalent.

\begin{theorem}\label{thm:equiv}
Let ${\cal F}$ be  a (possibly infinite) class of graphs.
Then  {\sc Minimum $\cal F$-Overlay} and {\sc Prescribed Minimum $\cal F$-Overlay} are polynomially equivalent.
\end{theorem}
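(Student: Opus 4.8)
The plan is to exhibit polynomial-time reductions in both directions, one of which is immediate. Since {\sc Minimum $\F$-Overlay} is exactly {\sc Prescribed Minimum $\F$-Overlay} with $E=\emptyset$, the former reduces trivially to the latter, so all the work lies in reducing {\sc Prescribed Minimum $\F$-Overlay} to {\sc Minimum $\F$-Overlay}. Given an instance $(H,k,E)$, I would construct a hypergraph $H'$ carrying \emph{no} prescribed edges, together with a budget $k'$, so that $\ov_{\F}(H')=\ov_{\F}(H;E)+c$ for an easily computable constant $c$. Concretely, this means designing a gadget that forces each edge of $E$ into every optimal overlay of $H'$, after which the budget translation $k'=k+c$ and the correctness of the reduction are immediate.

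The heart of the matter, and what I expect to be the main obstacle, is that one cannot force a \emph{specific} pair $uv$ to be an edge using a single hyperedge: if some member of $\F_p$ has a non-edge, any hyperedge containing $u$ and $v$ can be overlaid while leaving $uv$ absent. I would overcome this by an amplification argument. First fix an integer $p$ with $\F_p\neq\emptyset$ and $\overline{K_p}\notin\F_p$, so that every member of $\F_p$ has at least one edge; such a $p$ exists unless every nonempty $\F_p$ contains $\overline{K_p}$, a degenerate case treated separately below. To force $uv\in E$, I attach to the pair $\{u,v\}$ two fresh, vertex-disjoint hyperedges $S_1=\{u,v\}\cup W_1$ and $S_2=\{u,v\}\cup W_2$, where $W_1,W_2$ are brand-new sets of $p-2$ vertices used nowhere else in $H'$.

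The point is that, across the two copies, the pair $\{u,v\}$ is the only shared pair, hence the only edge that can be \emph{reused}; every other edge needed to overlay $S_1$ or $S_2$ is private to one copy. Writing $a=\min_{F\in\F_p}|E(F)|\ge 1$, overlaying one copy \emph{using} the edge $uv$ costs exactly $a-1$ private edges (map an edge of a minimum member of $\F_p$ onto $uv$), whereas overlaying it \emph{without} $uv$ costs at least $a$ private edges. I would make this precise by a short exchange argument: for any overlay of $H'$, letting $A\subseteq E$ be the set of prescribed edges it actually contains, the total cost is at least $\ov_{\F}(H;A)+2(a-1)|A|+2a\,|E\setminus A|$, and since $\ov_{\F}(H;A)\ge \ov_{\F}(H;E)-|E\setminus A|$, substituting gives a lower bound of $\ov_{\F}(H;E)+(2a-1)|E|-|A|$, which is strictly decreasing in $|A|$. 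Hence every optimal overlay takes $A=E$, the gadgets on disjoint vertices contribute a fixed overhead, and $\ov_{\F}(H')=\ov_{\F}(H;E)+2(a-1)|E|$ (the matching upper bound comes from extending an optimal overlay of $H$ containing $E$ by $a-1$ private edges in each copy). Setting $k'=k+2(a-1)|E|$ completes the reduction, which is polynomial since $p$ and $a$ depend only on the fixed family $\F$.

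Finally I would dispose of the degenerate case in which $\overline{K_p}\in\F_p$ for every $p$ with $\F_p\neq\emptyset$: then the edgeless graph overlays $\F$ on any feasible hypergraph, so $\ov_{\F}(H)=0$ and $\ov_{\F}(H;E)=|E|$ are both computable in polynomial time, and the two now-trivial problems are polynomially equivalent for the usual reason. Throughout, the only genuinely delicate step is the edge-forcing gadget for an arbitrary, possibly infinite $\F$; once the amplification pins $uv$ down as the uniquely cheapest shared resource (the per-copy gap of at least $1$, doubled, beats the at-most-$1$ saving from relaxing the constraint on the original hyperedges), the accounting and the choice of budget are routine.
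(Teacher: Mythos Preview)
Your argument is correct and rests on the same gadget idea as the paper: for each prescribed edge $e=uv$, attach fresh hyperedges of size $p$ whose only ``old'' pair is $\{u,v\}$, so that the edge $uv$ is the unique shared resource. The difference lies in how the key equality $\ov_{\F}(H')=\ov_{\F}(H;E)+c$ is established. The paper uses a \emph{single} gadget per prescribed edge and an exchange argument: among minimum-size overlays of $H'$, pick one maximising $|E\cap E(G')|$; if some $e\in E$ were missing, one could swap the at-least-$a$ private edges in $G'[S_e]$ for a copy of a minimum $F\in\F_p$ containing $e$, contradicting maximality. You instead use \emph{two} gadgets per edge and a counting (amplification) argument, so that omitting $e$ incurs a strict penalty of at least $1$ across the two copies, beating the at-most-$1$ saving from relaxing the constraint inside $H$. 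Your route avoids the ``optimal-with-tiebreaker'' device at the cost of doubling the gadget; the paper's route is slightly more economical but relies on the exchange trick. Both are perfectly valid and lead to the same reduction with an additive overhead that is polynomial and depends only on the fixed family $\F$.
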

\begin{proof}
An instance $(H, k)$ of {\sc Minimum $\cal F$-Overlay} is clearly equivalent to the instance $(H, k, \emptyset)$ of {\sc Prescribed Minimum $\cal F$-Overlay}.
This gives an easy polynomial reduction from {\sc Minimum $\cal F$-Overlay} to {\sc Prescribed Minimum $\cal F$-Overlay}.

\medskip

We now give a polynomial reduction from {\sc Prescribed Minimum $\cal F$-Overlay} to {\sc Minimum $\cal F$-Overlay}. Let us denote by ${\cal F}_p$ the set of graphs of $\cal F$ with order $p$.
Clearly, if ${\cal F}_p =\emptyset$ or $\K_p \in {\cal F}_p$ for every positive integer $p$, then both  {\sc Minimum $\cal F$-Overlay} and {\sc Prescribed Minimum $\cal F$-Overlay} are polynomial-time solvable.

We may assume henceforth  that there exists $p$ such that ${\cal F}_p \neq \emptyset$ and $\K_p \notin {\cal F}_p$.
Let $F$ be an element of ${\cal F}_p$ with the minimum number of edges. Observe that $|E(F)| \geq 1$.

%\ig{il faudrait uniformiser la notation pour les aretes. Ici on utilise $\{u,v\}$, mais dans la Section 1.1 on utilise $uv$}

Let $(H, k, E)$ be an instance of {\sc Prescribed Minimum $\cal F$-Overlay}. For every edge $e=u_ev_e\in E$, we add a set $X_e$ of $|V(F)| - 2$ new vertices and the hyperedge $S_e=X_e\cup \{u_e, v_e\}$.
Let $H'$ be the hypergraph defined by $V(H')=V(H)\cup \bigcup_{e\in E} X_e$ and $E(H')=E(H)\cup \{S_e \mid e\in E\}$.
We shall prove that $\ov_{\cal F}(H') = \ov_{\cal F}(H;E) + |E|(|F|-1)$.

\medskip
Suppose first that there is a graph $G$ overlaying $\cal F$ on $H$ with $E\subseteq E(G)$ and $|E(G)|\leq k$.
For any edge $\in E$, let $F_e$ be a copy of $F$ with vertex set $S_e$ such that $e\in E(F_e)$.
Such a $F_e$ exists because $F$ is non-empty.
Let $G'$ be the graph with vertex set $V(H')$ and edge set $E(G)\cup \bigcup_{e\in E} E(F_e)$.
Clearly, $G'$ is a graph overlaying $\cal F$ on $H'$ with $k+|E|(|F|-1)$ edges.

\medskip
Reciprocally, assume that  $\ov_{\cal F}(H') \leq k + |E|(|F|-1)$. Let $G'$ be a graph overlaying $\cal F$ on $H'$ of size at most $k+|E|(|F|-1)$ whose number of edges in $E$ is maximum.

We claim that $E\subseteq E(G')$.
Suppose not. Then there is an edge $e\in E\setminus E(G')$. Let $F_e$
be be a copy of $F$ with vertex set $S_e$ such that $e\in E(F_e)$.
Since the vertices of $X_e$ are only in the hyperedge $S_e$ of $H'$, replacing the edges of $G'[ S_e]$ by $E(F_e)$ in $G'$
results in a graph overlaying $\cal F$ on $H'$ of size $k+|E|(|F|-1)$ containing one more edge in $E$, a contradiction.
This proves the claim.

Let $G$ be the graph with vertex set $V(H)$ and edge set $E(H')\cap {V(H) \choose 2}$. Clearly, $G$ is a graph overlaying $\cal F$ on $H$, and by the above claim $E\subseteq E(G)$.
Now for every $e\in E$, $G'[ S_e]$ contains (at least) $|F|$ edges and only one of them is in $E(G)$.
Therefore, $|E(G)|\leq |E(G')|-|E|(|F|-1) \leq k$.
\end{proof}

\subsection{Hard sets}
%%%%%%%%%%%%%

A set ${\cal F}_p$ of graphs of order $p$ is {\it hard} if there is a graph $J$ of order $p$ and two distinct non-edges $e_1, e_2$ of $J$ such that
\begin{itemize}
\item[$\bullet$] no subgraph of $J$ is in ${\cal F}_p$ (including $J$ itself),
\item[$\bullet$] $J\cup e_1$ has a subgraph in  ${\cal F}_p$ and $J\cup e_2$ has a subgraph in  ${\cal F}_p$.
\end{itemize}
The graph $J$ is called the {\it hyperedge graph of ${\cal F}_p$} and $e_1$ and $e_2$ are its two {\it shifting non-edges}.
%The {\it centre} of $H$ is the vertex $y$ and its {\it side} are $x$ and $y$.

\begin{lemma}
\label{lem:hard}
Let $p \ge 3$ and ${\cal F}_p$ be a set of graphs of order $p$.
If ${\cal F}_p$ is hard, then
{\sc Prescribed Minimum ${\cal F}_p$-Overlay} is \np-complete.
\end{lemma}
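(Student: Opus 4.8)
The plan is to reduce from a known $\np$-complete problem to {\sc Prescribed Minimum ${\cal F}_p$-Overlay}, using the hyperedge graph $J$ and its two shifting non-edges $e_1,e_2$ as a gadget that encodes a binary choice. Since $J$ has no subgraph in ${\cal F}_p$ but both $J\cup e_1$ and $J\cup e_2$ do, a copy of $J$ placed on a hyperedge forces the overlay graph to contain at least one of $e_1,e_2$ (and prescribing the edges of $J$ lets me control exactly which edges are present). This is precisely the kind of ``two admissible completions'' structure that mimics a Boolean variable or a clause choice. I would reduce from a problem whose hardness is robust and whose instances map cleanly onto such binary choices — a natural candidate is {\sc Vertex Cover} or {\sc 3-SAT}; given that the two shifting non-edges offer exactly two cheap ways to satisfy a hyperedge, a reduction from a covering-type or $2$-choice satisfiability problem seems most direct.

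\textbf{The gadget construction.}
First I would take an instance of the source problem and, for each ``choice'' element (e.g.\ each edge to be covered, or each clause), introduce a fresh hyperedge carrying a private copy of $J$ on $p$ new vertices, except that the endpoints of the two shifting non-edges are identified with the global vertices representing the two possible ``witnesses'' of that choice. All the edges of the graph $J$ itself would be placed in the prescribed set $E$, so that on each such hyperedge the only freedom left to any overlay graph is whether to add $e_1$, $e_2$, both, or neither. By the hardness property, neither is infeasible and both is wasteful, so an optimal overlay picks exactly one of $e_1,e_2$ per gadget; I then wire the endpoints so that these choices correspond to selecting a vertex into a cover or a literal to satisfy a clause. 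The budget $k$ would be set to the total number of prescribed edges plus the number of gadgets plus the target value of the source instance, so that staying within budget is equivalent to a valid solution of the source problem.

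\textbf{Correctness and the main obstacle.}
The forward direction is routine: from a solution of the source instance I build an overlay graph by adding, for each gadget, the shifting non-edge corresponding to the chosen witness, and the count matches the budget by construction. The reverse direction requires showing that any overlay graph within budget can be normalized to one of this canonical form without increasing its size — in particular that I may assume at most one of $e_1,e_2$ is present per gadget and that no ``extraneous'' edges help. \emph{The main obstacle is controlling side effects across gadgets:} because ${\cal F}_p$ is an arbitrary hard set, a single global edge might simultaneously serve as a shifting non-edge in several gadgets, or an edge added for one hyperedge might accidentally create a subgraph in ${\cal F}_p$ in a different hyperedge and thereby ``help for free.'' To neutralize this I would make each gadget vertex-disjoint from the others except at the two designated witness vertices, and ideally use private copies of $J$ so that the only shared structure is exactly the intended interface; a careful argument (using that no subgraph of $J$ lies in ${\cal F}_p$) is then needed to rule out unintended satisfactions and to show the budget is tight. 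Membership in $\np$ follows from the standing assumption that {\sc ${\cal F}$-Recognition} is in $\np$, as already noted in the excerpt.
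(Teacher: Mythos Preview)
Your high-level strategy is the same as the paper's: reduce from {\sc Vertex Cover}, place a copy of $J$ (with its edges prescribed) on each hyperedge so that the only cheap completions are the two shifting non-edges, and set the budget so that staying within it forces a small cover. The paper also splits into two cases according to whether $e_1$ and $e_2$ share a vertex or are disjoint.

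However, your sketch has a genuine gap at the point where the reduction must translate ``$u$ is in the cover'' into a \emph{single} reusable edge. With one gadget per edge $uv$ and ``the endpoints of the two shifting non-edges identified with the global vertices for $u$ and $v$,'' the shifting non-edge corresponding to $u$ is, in general, a \emph{different} edge in each gadget incident to $u$ (it has at least one private endpoint), so picking $u$ in the cover does not save anything across gadgets: the optimum is $|E(G)|$ non-prescribed edges regardless of $\vc(G)$, and the reduction collapses. To get sharing you would need both endpoints of the ``$u$''-side shifting non-edge to be global vertices attached to $u$; that is possible only when $e_1$ and $e_2$ are disjoint (four endpoints to distribute), and even then you must argue that the extra shared vertices $x_u,y_u,x_v,y_v$ cannot combine with the prescribed edges of $J$ to produce a member of ${\cal F}_p$ in some unintended way.

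The paper resolves this by using \emph{three} hyperedges per edge $uv$ rather than one: a ``selector'' hyperedge whose shifting non-edges are $z_ey_u$ and $z_ey_v$, together with two ``propagator'' hyperedges whose shifting non-edges are $\{y_uz_e,\,y_ux_u\}$ and $\{y_vz_e,\,y_vx_v\}$. The global edge $x_uy_u$ now serves all propagators at $u$, and a single extra local edge per $e\in E(G)$ handles the selector and the remaining propagator; this yields the tight equation $\ov_{{\cal F}_p}(H_G;E)=|E|+\vc(G)+|E(G)|$. An analogous construction (with four shared vertices per edge) is used when $e_1,e_2$ are disjoint. Your proposal is missing exactly this chaining idea and the intersecting/disjoint case split; without them the budget argument does not go through.
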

\begin{proof}
We present a reduction from {\sc Vertex Cover}.
Let $J$ be the hyperedge graph of ${\cal F}_p$ and $e_1, e_2$ its shifting non-edges.
We distinguish two cases depending on whether $e_1$ and $e_2$ are disjoint or not.
The proofs of both cases are very similar.

\medskip

\noindent \underline{Case 1:} $e_1$ and $e_2$ intersect.
Let $G$ be a graph.
Let $H_G$ be the hypergraph constructed as follows.
\begin{itemize}
\item[$\bullet$] For every vertex $v \in V(G)$ add two vertices $x_v$, $y_v$.
\item[$\bullet$] For every edge $e=uv$, add a vertex $z_e$ and three disjoint sets $Z_e$, $Y^e_u$, and $Y^e_v$ of size $p-3$.
\item[$\bullet$] For every edge $e=uv$, create three hyperedges $Z_e\cup \{z_e, y_u, y_v\}$, $Y^e_u\cup \{x_u, y_u, z_e\}$, and
$Y^e_v\cup \{x_v, y_v, z_e\}$.
\end{itemize}

%\ig{ici, on utilise $uv$ pour les aretes}

We select forced edges as follows: for every edge $e=uv\in E(G)$, we force the edges of a copy of $J$ on $Z_e\cup \{z_e, y_u, y_v\}$ with shifting non-edges $z_ey_u$ and $z_ey_v$, we force the edges of a copy of $J$ on $Y^e_u\cup \{z_e, y_u, x_u\}$ with shifting non-edges $y_uz_e$ and $y_ux_u$, and
we force the edges of a copy of $J$ on $Y^e_v\cup \{z_e, y_v, x_v\}$ with shifting non-edges $y_vz_e$ and $y_vx_v$.

We shall prove that $\ov_{{\cal F}_p}(H_G) = |E| +  \vc(G) + |E(G)|$, which yields the result. 
%\ig{J'ai rajout\'e:} 
Here, $\vc(G)$ denotes the size of a minimum vertex cover of $G$.

\smallskip

Consider first a minimum vertex cover $C$ of $G$.
For every edge $e\in E(G)$, let $s_e$ be an endvertex of $e$ that is not in $C$ if such vertex exists, or any endvertex of $e$ otherwise.
Set $E_G=E\cup \{x_vy_v \mid v\in C\} \cup \{z_{e}y_{s_e} \mid e \in E(G)\}$.
One can easily check that $(V_G, E\cup E_G)$ overlays ${\cal F}_p$ on $H_G$. Indeed, for every hyperedge $S$ of $H_G$,  at least one of the shifting non-edges of its forced copy of $J$ is an edge of $E\cup E_G$.
Therefore $\ov_{{\cal F}_p}(H_G) \leq |E| + |E_G| = |E| + \vc(G) + |E(G)|$.

\medskip

Now, consider a minimum-size graph  $(V_G, E\cup E_G)$ overlaying ${\cal F}_p$ on $H_G$ and maximizing the edges of the form $x_uy_u$.
Let $e=uv\in E(G)$.
Observe that the edge $y_uy_{v}$ is contained in a unique hyperedge, namely  $Z_e\cup\{z_{e}, y_u, y_v\}$.
Therefore, free to replace it (if it is not in $E$) by $z_{e}y_v$, we may assume that $y_uy_v\notin E_G$.
Similarly, we may assume that the edges $x_uz_{e}$ and $x_vz_{e}$ are not in $E_G$, and that no edge with an endvertex in $Y^e_u\cup Y^e_v\cup Z_e$ is in $E_G$.
Furthermore, one of $x_uy_u$ and $x_vy_v$ is in $E_G$.
Indeed, if $\{x_uy_u, x_vy_v\} \cap E_G=\emptyset$, then $\{y_uz_{e}, y_vz_{e}\} \subseteq E_G$ because $E_G$ contains an edge included in every hyperedge. Thus replacing $y_uz_{e}$ by $x_uy_u$ results in another graph overlaying ${\cal F}_p$ on $H_G$ with one more edge of type $x_u y_u$ than the chosen one, a contradiction.

Let $C=\{ u \mid x_uy_u \in E_G\}$.
By the above property, $C$ is a vertex cover of $G$, so $|C|\geq \vc(G)$.
Moreover, $E_G$ contains an edge in every hyperedge $Z_e\cup \{z_{e}, y_u, y_v\}$, and those $|E(G)|$ edges are not in $\{x_uy_u \mid u\in V(G)\}$.
Therefore $|E_G|\geq |C| + |E(G)|\geq \vc(G) + |E(G)|$.

\smallskip

\noindent \underline{Case 2:} $e_1$ and $e_2$ are disjoint, say $e_1=x_1y_1$ and $e_2=x_2y_2$ (thus $p \ge 4$).
Let $G$ be a graph.
Let $H_G$ be the hypergraph constructed as follows.
\begin{itemize}
\item[$\bullet$] For every vertex $v \in V(G)$, add two vertices $x_v$, $y_v$.
\item[$\bullet$] For every edge $e=uv$, add four vertices $x_u^e, y_u^e, x_v^e, y_v^e$ and three disjoint sets $Z_e$, $Y^e_u$ and $Y^e_v$ of size $p-4$.
\item[$\bullet$] For every edge $e=uv$, create three hyperedges $Z_e\cup \{x_u^e, y_u^e, x_v^e, y_v^e\}$, $Y^e_u\cup \{x_u, y_u, x_u^e, y_u^e\}$, and
$Y^e_v\cup \{x_v, y_v,x_v^e, y_v^e\}$.
\end{itemize}

We select forced edges as follows: for every edge $e=uv\in E(G)$, we force the edges of a copy of $J$ on $Z_e\cup \{x_u^e, y_u^e, x_v^e, y_v^e\}$ with shifting non-edges $x_u^e, y_u^e$ and $x_v^e, y_v^e$, we force the edges of a copy of $J$ on $Y^e_u\cup \{x_u, y_u, x_u^e, y_u^e\}$  with shifting non-edges $x_uy_u$ and $x_u^e, y_u^e$, and
we force the edges of a copy of $J$ on $Y^e_v\cup \{x_v, y_v,x_v^e, y_v^e\}$  with shifting non-edges $x_vy_v$ and $x_v^e, y_v^e$.

We shall prove that $\ov_{{\cal F}_p}(H_G) = |E| +  \vc(G) + |E(G)|$, which yields the result.

\smallskip

Consider first a minimum vertex cover $C$ of $G$.
For every edge $e\in E(G)$, let $s_e$ be an endvertex of $e$ that is not in $C$ if one such vertex exists, or any endvertex of $e$ otherwise.
Set $E_G=E\cup \{x_vy_v \mid v\in C\} \cup \{x^e_{s_e}y^e_{s_e} \mid e \in E(G)\}$.
One can easily check that $(V_G, E\cup E_G)$ overlays ${\cal F}_p$ on $H_G$.
Indeed, for every hyperedge $S$ of $H_G$,  at least one of the shifting non-edges of its forced copy of $J$ is an edge of $E\cup E_G$.
Therefore $\ov_{{\cal F}_p}(H_G) \leq |E| + |E_G| = |E| + \vc(G) + |E(G)|$.

\smallskip

Now, consider a minimum-size graph  $(V_G, E\cup E_G)$ overlaying ${\cal F}_p$ on $H_G$ and maximizing the edges of the form $x_uy_u$.
Let $e=uv\in E(G)$.
Observe that the edge $x_ux_u^e$ is contained in a unique hyperedge, namely  $Y^e_u\cup \{x_u, y_u, x_u^e, y_u^e\}$.
Therefore, free to replace it (if it is not in $E$) by $x_uy_u$, we may assume that $x_ux_u^e\notin E_G$.
Similarly, we may assume that the edges $x_uy^e_u$, $y_ux^e_u$, $y_uy^e_u$, $x_vx_v^e$ $x_vy^e_v$, $y_vx^e_v$, $y_vy^e_v$,
 $x^e_ux_v^e$ $x^e_uy^e_v$, $y^e_ux^e_v$, and  $y^e_uy^e_v$
are not in $E_G$, and that no edge with an endvertex in $Y^e_u\cup Y^e_v\cup Z_e$ is in $E_G$.
Furthermore, one of $x_uy_u$ and $x_vy_v$ is in $E_G$.
Indeed, if $\{x_uy_u, x_vy_v\} \cap E_G=\emptyset$, then $\{x^e_uy^e_u, x^e_vy^e_v\} \subseteq E_G$ because $E_G$ contains an edge included in every hyperedge. Thus replacing $x^e_uy^e_u$ by $x_uy_u$ results in another graph overlaying ${\cal F}_p$ on $H_G$ with one more edge of type $x_u y_u$ than the chosen one, a contradiction.

Let $C=\{ u \mid x_uy_u \in E_G\}$.
By the above property, $C$ is a vertex cover of $G$, so $|C|\geq \vc(G)$.
Moreover, $E_G$ contains an edge in every hyperedge $Z_e\cup \{x_u^e, y_u^e, x_v^e, y_v^e\}$, and those $|E(G)|$ edges are not in $\{x_uy_u \mid u\in V(G)\}$.
Therefore $|E_G|\geq |C| + |E(G)|\geq \vc(G) + |E(G)|$.
\end{proof}

Let ${\cal F}_p$ be a set of graphs of order $p$.
It is {\it free}
%\nathann{suggestion: `minimal'?}
if there are no two distinct elements of ${\cal F}_p$ such that one is a subgraph of the other.
The {\it core} of ${\cal F}_p$ is the free set of graphs $F$ having no proper subgraphs in ${\cal F}_p$.
It is easy to see that $\F_p$ is overlayed by a hypergraph if and only if its core does.
Henceforth, we may restrict our attention to free sets of graphs.

\begin{lemma}\label{lem:isol-hard}
Let ${\cal F}_p$ be a free set of graphs of order $p$.
If a graph $F$ in ${\cal F}_p$ has an isolated vertex and a vertex of degree $1$, then $\F_p$ is hard.
\end{lemma}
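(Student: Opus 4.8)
The plan is to construct a hard set directly from $F$, by exhibiting a hyperedge graph $J$ and two shifting non-edges $e_1, e_2$. Let $a$ be an isolated vertex of $F$ and let $b$ be a vertex of degree $1$ in $F$, with $bc \in E(F)$ its unique incident edge. The natural candidate for $J$ is the graph $F$ with its single edge $bc$ deleted: set $J = F - bc$. Then $J$ has two non-edges of interest, namely $bc$ itself and an edge incident to the isolated vertex $a$; I would take $e_1 = bc$ and $e_2 = ab$. First I would verify that $J \cup e_1 = F$ has a subgraph in $\F_p$ (indeed $F \in \F_p$), which handles the first shifting non-edge immediately. The main content is then to check the three conditions of hardness for this choice.

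\emph{Verifying the second shifting non-edge.} I must show $J \cup e_2 = (F - bc) + ab$ contains a member of $\F_p$ as a subgraph. The key observation is that $(F-bc)+ab$ and $F$ are isomorphic as abstract graphs: in $F$, vertex $a$ has degree $0$ and vertex $b$ has degree $1$, whereas in $(F-bc)+ab$ vertex $a$ has degree $1$ (joined to $b$) and vertex $b$ has degree $1$ (joined to $a$), with all other vertices unchanged. These two graphs are \emph{not} in general isomorphic, so I cannot argue this way. Instead the right argument is subgraph containment: since $J \cup e_2$ has $|E(F)|$ edges and contains $ab$, I want a copy of some element of $\F_p$ inside it. This is the delicate step and I return to it below.

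\emph{Verifying no subgraph of $J$ lies in $\F_p$.} Here I would use that $\F_p$ is free. Suppose some subgraph $J'$ of $J = F - bc$ were in $\F_p$. Since $J' \subseteq F - bc \subseteq F$ and $J' \in \F_p$ with $F \in \F_p$, freeness forces $J' = F$; but $J' \subseteq F - bc$ has strictly fewer edges than $F$, a contradiction. This cleanly establishes the first bullet, and simultaneously shows $e_1, e_2$ are genuinely non-edges of $J$ (they are absent from $F-bc$ by construction).

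\textbf{The hard part} will be the second shifting non-edge: producing a member of $\F_p$ inside $(F-bc)+ab$. The clean way is to exploit the isolated vertex $a$ and the degree-$1$ vertex $b$ by relabelling. Consider the bijection $\sigma$ on $V(F)$ that swaps $a$ and $b$ and fixes everything else. Applying $\sigma$ to $F$ sends the edge $bc$ to $ac$ and fixes all other edges; so $\sigma(F)$ is $F$ with the pendant edge moved from $b$ to $a$, and $\sigma(F) \in \F_p$ since $\F_p$ is closed under isomorphism. Now I would compare $\sigma(F)$ against $(F-bc)+ab$. If instead I swap roles and think of building a copy of $F$ itself: map the isolated vertex of $F$ to $a$ and the pendant-endpoint $b$ to $a$'s new neighbor, embedding the pendant edge of $F$ onto $ab$ and the rest of $F-bc$ onto itself; I must confirm this is a well-defined injective homomorphism realizing $F$ as a subgraph of $(F-bc)+ab$. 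Checking that the non-pendant edges survive and that no two vertices collide is the one routine-but-essential verification; I expect it to go through precisely because $a$ was isolated (so no old edge at $a$ obstructs the new edge $ab$) and $b$ had degree exactly $1$ (so deleting $bc$ frees $b$ to receive the image of the pendant edge). Once this embedding is exhibited, all three hardness conditions hold, so $\F_p$ is hard and the lemma follows.
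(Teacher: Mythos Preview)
Your second shifting non-edge $e_2 = ab$ is the wrong choice, and the embedding you sketch at the end cannot exist in general. Take $p=4$ and let $F$ be the path $b\text{--}c\text{--}d$ together with the isolated vertex $a$. Then $J = F - bc$ is the single edge $cd$, and $J \cup ab$ is the perfect matching $\{ab, cd\}$, which contains no path on three vertices and hence no copy of $F$; so if $\F_4 = \{F\}$ (a free family), $J \cup ab$ has no subgraph in $\F_4$ and $J$ is not a hyperedge graph with these non-edges. Your attempted injection breaks for the same reason: to send the pendant edge $bc$ of $F$ onto $ab$ you must map $c$ to one of $a,b$, but those are already the images of the isolated and pendant vertices, so the map is not injective.

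The one-line fix is to take $e_2 = ac$ instead of $ab$. You have in fact already computed the relevant isomorphism: your bijection $\sigma$ swapping $a$ and $b$ sends $E(F)$ to $(E(F)\setminus\{bc\})\cup\{ac\} = E(J)\cup\{ac\}$, precisely because no edge of $F$ other than $bc$ meets $\{a,b\}$. Thus $J \cup ac \cong F$, so it contains a member of $\F_p$. With $e_1 = bc$ and $e_2 = ac$ your freeness argument for the first bullet goes through unchanged, $e_1 \neq e_2$ since $a \neq b$, and both are non-edges of $J$ (the first by construction, the second because $a$ is isolated in $F$). This is exactly the paper's argument: delete the pendant edge and take as shifting non-edges the deleted edge together with the edge obtained from it by swapping the isolated vertex with the pendant leaf. (The paper writes the second non-edge as $yz$, but the isomorphism ``$J\cup yz\cong F$'' asserted there only holds for $xz$---corresponding to your $ac$---so this is evidently a typo.)
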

\begin{proof}
Let $z$ be an isolated vertex of $F$, $y$ a vertex of degree $1$, and $x$ the neighbor of $y$ in $F$.
The graph $J=F\setminus xy$ contains no element of ${\cal F}_p$ because ${\cal F}_p$ is free.
Moreover $J\cup xy$ and $J\cup yz$ are isomorphic to $F$.
Hence $J$ is a hyperedge graph of ${\cal F}_p$.
Thus, by Lemma~\ref{lem:hard}, {\sc Prescribed Minimum ${\cal F}_p$-Overlay} is \np-complete. 
\end{proof}

The {\it star of order $p$}, denoted by $S_p$, is the graph of order $p$ with $p-1$ edges incident to a same vertex.

\begin{lemma}\label{lem:star-hard}
Let $p\geq 3$ and
let ${\cal F}_p$ be a free set of graphs of order $p$ containing a subgraph of the star $S_p$ different from $\K_p$.
Then ${\cal F}_p$ is hard.
\end{lemma}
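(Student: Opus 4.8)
The statement asserts that a free set $\mathcal{F}_p$ (with $p \geq 3$) containing a proper subgraph of the star $S_p$ other than $\overline{K_p}$ is hard. The plan is to exhibit a concrete hyperedge graph $J$ on $p$ vertices together with two distinct non-edges $e_1, e_2$ satisfying the two defining conditions of hardness, and then invoke Lemma~\ref{lem:hard}. The natural candidate for a witness in $\mathcal{F}_p$ is a subgraph $T$ of $S_p$: since $T \subseteq S_p$ and $T \neq \overline{K_p}$, the graph $T$ is a star forest consisting of one nontrivial star $S_{t}$ (centered at the center $c$ of $S_p$, with $t-1 \geq 1$ leaves) together with $p - t$ isolated vertices. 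So $T$ has at least one edge and, because $T$ is a proper subgraph of $S_p$, at least one isolated vertex.

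\medskip

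\textbf{Main case and the obstacle.}
The most direct attempt is to reuse the mechanism of Lemma~\ref{lem:isol-hard}: if the chosen witness $T$ has both an isolated vertex and a vertex of degree exactly $1$, then Lemma~\ref{lem:isol-hard} applies verbatim and we are done. A leaf of the nontrivial star has degree $1$, and a proper subgraph of $S_p$ leaves at least one vertex isolated, so whenever $\mathcal{F}_p$ contains such a $T$ we may simply quote Lemma~\ref{lem:isol-hard}. The delicate point, and the main obstacle I anticipate, is that $\mathcal{F}_p$ being \emph{free} means the particular $T$ we want need not itself lie in $\mathcal{F}_p$: $\mathcal{F}_p$ contains \emph{some} subgraph of $S_p$, but a free family cannot contain two comparable graphs, so it contains exactly one ``minimal'' star-subgraph $S_{t}$ (the smallest nontrivial star forest present as a spanning member). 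I would therefore let $F \in \mathcal{F}_p$ be the star-subgraph of $S_p$ guaranteed by hypothesis, write $F = S_t \cup \overline{K_{p-t}}$ with $2 \leq t \leq p$ (here $S_t$ has center $c$ and leaves $\ell_1,\dots,\ell_{t-1}$, and $z_1,\dots,z_{p-t}$ are the isolated vertices), and split on whether $t < p$ or $t = p$.

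\medskip

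\textbf{Constructing $J$ and the two non-edges.}
When $t < p$ there is an isolated vertex $z_1$ and a leaf $\ell_1$ of degree $1$; then $F$ has an isolated vertex and a vertex of degree $1$, so Lemma~\ref{lem:isol-hard} immediately yields that $\mathcal{F}_p$ is hard. When $t = p$, the witness is the full star $S_p$ itself (no isolated vertex), so Lemma~\ref{lem:isol-hard} does not apply and I must build $J$ by hand. Here I would take $J = S_p \setminus c\ell_1$, that is, $S_p$ with one of its edges deleted, so that in $J$ the leaf $\ell_1$ becomes isolated and the two non-edges are $e_1 = c\ell_1$ and $e_2 = \ell_1\ell_2$. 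One checks the hardness conditions: $J \cup e_1 = S_p \in \mathcal{F}_p$, and $J \cup e_2$ is a star $S_{p-1}$ (on center $c$ with leaves $\ell_2,\dots,\ell_{p-1}$) together with the edge $\ell_1\ell_2$, which contains $S_p$ as a spanning subgraph only if $\ell_1$ attaches to $c$ --- so instead I would choose $e_2$ to be an edge making $J \cup e_2$ again contain a spanning member of $\mathcal{F}_p$. The clean choice is to delete a leaf edge and recreate the star through a different leaf: since all edges of $S_p$ are symmetric, $J \cup c\ell_1$ and $J \cup c\ell_1'$ for two distinct missing slots both give $S_p$; concretely, delete \emph{two} leaf edges $c\ell_1, c\ell_2$ to form $J$, and set $e_1 = c\ell_1$, $e_2 = c\ell_2$, so that each of $J\cup e_1$, $J\cup e_2$ is a subgraph of $S_p$ missing one edge but still containing the $\mathcal{F}_p$-member $S_{p-1}\cup\overline{K_1}$ as a spanning subgraph, while $J$ itself contains no member of $\mathcal{F}_p$ by freeness. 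Verifying the ``no subgraph of $J$ is in $\mathcal{F}_p$'' clause is the step requiring care, and it follows because $J$ is a proper subgraph of every candidate and $\mathcal{F}_p$ is free. With $J, e_1, e_2$ in hand, Lemma~\ref{lem:hard} finishes the proof.
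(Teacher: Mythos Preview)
Your treatment of the case $t<p$ (i.e., the star-subgraph $F\in\mathcal{F}_p$ is a \emph{proper} subgraph of $S_p$) is fine and matches the paper: $F$ has both an isolated vertex and a leaf, so Lemma~\ref{lem:isol-hard} applies.

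The genuine gap is in the case $t=p$, i.e., $S_p\in\mathcal{F}_p$. Your final construction takes $J=S_p\setminus\{c\ell_1,c\ell_2\}$ with $e_1=c\ell_1$, $e_2=c\ell_2$, and asserts that $J\cup e_i$ ``still contain[s] the $\mathcal{F}_p$-member $S_{p-1}\cup\overline{K_1}$''. But nothing tells you that $S_{p-1}\cup\overline{K_1}\in\mathcal{F}_p$; in fact it \emph{cannot} be, since it is a proper subgraph of $S_p$ and $\mathcal{F}_p$ is free. More generally, $J\cup e_i$ is itself a proper subgraph of $S_p$, so every subgraph of $J\cup e_i$ is a proper subgraph of $S_p$, and by freeness none of them lies in $\mathcal{F}_p$. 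Thus the second hardness condition fails outright for your $J$. (Your earlier attempt with $e_2=\ell_1\ell_2$ only rescues the case $p=3$, where $J\cup e_2\cong S_3$.) The moral is that when $S_p\in\mathcal{F}_p$, the hyperedge graph $J$ cannot be chosen as a subgraph of $S_p$; one must go \emph{outside} the star.

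This is exactly what the paper does, and it is where the real work lies. The paper introduces the graph $Q_p$ on $\{a_1,a_2,b,c_1,\dots,c_{p-3}\}$ with edge set $\{a_1a_2\}\cup\{a_ic_j\}$: here $a_1$ and $a_2$ have degree $p-2$, so $Q_p$ contains no $S_p$, but adding $a_1b$ (resp.\ $a_2b$) makes $a_1$ (resp.\ $a_2$) universal and hence yields a spanning $S_p$. If $\mathcal{F}_p$ has no subgraph of $Q_p$ one is done with $J=Q_p$; otherwise the paper takes a minimal such subgraph $Q$, disposes of the case where $Q$ has a degree-$1$ vertex via Lemma~\ref{lem:isol-hard}, and in the remaining case performs a further edge-swap (passing through $R=(Q\setminus a_1c_1)\cup a_2b$ and then $T=Q\setminus a_1c_1$) to exhibit the hyperedge graph. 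This nested case analysis is precisely what is missing from your argument.
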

\vspace{-5mm}
\begin{center}
  % S_8
  \begin{tikzpicture}[xscale=.5]
    \draw node[fill,circle,scale=.5,label={0:$c_1$}] (c) at (0,0) {};
    \foreach \i in {2,...,7} {
      \draw node[fill,circle,scale=.5,label={90:$c_\i$}] (c\i) at (\i*.8-4.5*.8,1) {};
      \draw (c) -- (c\i);
    }
    \draw node at (0,-.6) {$S_8$};
  \end{tikzpicture}
  % Q_8
  \hspace{1cm}
  \begin{tikzpicture}[xscale=.5]
    \draw node[fill,circle,scale=.5,label={0:$b$}] at (0,1.8) {};
    \draw node[fill,circle,scale=.5,label={180:$a_1$}] (a1) at (-1,0) {};
    \draw node[fill,circle,scale=.5,label={0:$a_2$}]   (a2) at ( 1,0) {};
    \foreach \i in {1,...,5} {
      \draw node[fill,circle,scale=.5,label={90:$c_\i$}] (c\i) at (\i*.8-3*.8,1) {};
      \draw (a1) -- (c\i) -- (a2);
    }
    \draw (-1,0) -- (1,0);
    \draw node at (0,-.6) {$Q_8$};
  \end{tikzpicture}
%  \hspace{1cm}
  % R_8
  %\begin{tikzpicture}[xscale=.5]
  %  \draw node[fill,circle,scale=.5,label={180:$a_1$}] (a1) at (-1,0) {};
   % \draw node[fill,circle,scale=.5,label={-90:$a_2$}]   (a2) at ( 1,0) {};
    % \foreach \i in {2,...,5} {
    %  \draw node[fill,circle,scale=.5,label={90:$c_\i$}] (c\i) at (\i*.8-3.5*.8,1) {};
   %   \draw (a1) -- (c\i) -- (a2);%
%    }
 %   \draw node[fill,circle,scale=.5,label={0:$c_1$}] (c1) at (2.3,.6) {};
 %   \draw node[fill,circle,scale=.5,label={0:$b$}] (b) at (2.3,0) {};
 %   \draw (c1) -- (a2) -- (b);
 %   \draw (-1,0) -- (1,0);
  %  \draw node at (0,-.6) {$R_8$};
 % \end{tikzpicture}
 % \vspace{-3mm}
\end{center}
\begin{proof}
Let $S$ be the non-empty subgraph of $S_p$ in ${\cal F}_p$.
If $S\neq S_p$, then $S$ has an isolated vertex and a vertex of degree $1$, and so ${\cal F}_p$ is hard by Lemma~\ref{lem:isol-hard}. We may assume henceforth that $S_p\in {\cal F}_p$.

Let $Q_p$ be the graph with $p$ vertices $\{a_1,a_2, b, c_1, \dots , c_{p-3}\}$ and edge set
$\{a_1a_2\} \cup \{a_ic_j \mid 1\leq  i \leq 2, 1\leq j \leq p-3\}$.
Observe that $Q_p$ does not contain $S_p$ but $Q_p\cup a_1b$ and $Q_p\cup a_2b$ do.
If ${\cal F}_p$ contains no subgraph of $Q_p$, then ${\cal F}_p$ is hard.
So we may assume that ${\cal F}_p$ contains a subgraph of $Q_p$.

Let $Q$ be the subgraph of $Q_p$ in ${\cal F}_p$ that has the minimum number of triangles.
If $Q$ has a degree $1$ vertex, then ${\cal F}_p$ is hard by Lemma~\ref{lem:isol-hard}.
Henceforth we may assume that $Q$ has no vertex of degree $1$. So, without loss of generality,
there exists $q$ such that $E(Q)=\{a_1a_2\} \cup \{a_ic_j \mid 1\leq  i \leq 2, 1\leq j \leq q\}$.

Let $R=(Q\setminus a_1c_1)\cup a_2b$.
Observe that $R\cup a_1c_1$ and $R\cup a_1b$ contain $Q$.
If ${\cal F}_p$ contains no subgraph of $R$, then ${\cal F}_p$ is hard.
So we may assume that ${\cal F}_p$ contains a subgraph $R'$ of $R$.
But ${\cal F}_p$ contains no subgraph of $Q$ because it is free, so both $a_2c_1$ and $a_2b$ are in $R'$.
In particular, $c_1$ and $b$ have degree $1$ in $R'$.
%If $R'$ has an isolated vertex, then ${\cal F}_p$ is hard by Lemma~\ref{lem:isol-hard}.
%So we may assume that $R'$ has no isolated vertex, that is $q=p-3$ and $Q=Q_p$.

 Let $T=(Q\setminus a_1c_1)$. It is a proper subgraph of $Q$, so ${\cal F}_p$ contains no subgraph of $T$, because ${\cal F}_p$ is free.
 Moreover $T\cup a_1c_1=Q$ is in ${\cal F}_p$ and $T\cup a_2b=R$ contains $R'\in {\cal F}_p$.
 Hence ${\cal F}_p$ is hard.
 \end{proof}

\subsection{Proof of Theorem~\ref{thm:gen-1}}
%%%%%%%%%%%%%

%We denote by $\K_p$ the graph with $p$ vertices and no edges.

For convenience, instead of proving Theorem~\ref{thm:gen-1}, we prove the following statement, which is equivalent by Theorem~\ref{thm:equiv}.

\begin{theorem}\label{thm:gen}
Let ${\cal F}_p$ be a non-empty set of graphs of order $p$.
{\sc Prescribed Minimum ${\cal F}_p$-Overlay} is \np-complete if $\K_p\in {\cal F}_p$ or ${\cal F}_p=\{K_p\}$.
\end{theorem}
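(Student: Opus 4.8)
The two displayed alternatives are precisely the two trivially solvable families, so I first record why the claim must be read with the opposite relations. Recall that $\K_p=\overline{K_p}$ is the edgeless graph. If $\K_p\in{\cal F}_p$, then for every graph $G$ and every hyperedge $S$ of size $p$ the edgeless graph on $S$ is a spanning subgraph of $G[S]$ lying in ${\cal F}_p$; hence the graph $(V(H),E)$ consisting of the prescribed edges alone already overlays ${\cal F}_p$, so $\ov_{{\cal F}_p}(H;E)=|E|$ is computable in polynomial time. Likewise, if ${\cal F}_p=\{K_p\}$, then every hyperedge $S$ must induce a complete graph, so the unique minimum solution is $E\cup\bigcup_{S\in E(H)}\binom{S}{2}$, again computed in polynomial time. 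Thus under the literal hypotheses {\sc Prescribed Minimum ${\cal F}_p$-Overlay} is in $\poly$, and the only consistent reading---matching the sentence preceding the statement, which asserts equivalence with Theorem~\ref{thm:gen-1} through Theorem~\ref{thm:equiv}---is the complementary one: ${\cal F}_p$ is non-empty with $\K_p\notin{\cal F}_p$ and ${\cal F}_p\neq\{K_p\}$. I prove $\np$-completeness under these hypotheses.

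\textbf{Plan.}
Membership in $\np$ is immediate from the standing assumption that \textsc{${\cal F}$-Recognition} is in $\np$ (a spanning subgraph in ${\cal F}_p$ for each hyperedge, together with its recognition certificate, is a polynomial witness). For hardness, Lemma~\ref{lem:hard} reduces everything to a single combinatorial statement: it suffices to prove that ${\cal F}_p$ is \emph{hard}. First I would replace ${\cal F}_p$ by its core: a hypergraph is overlaid by ${\cal F}_p$ exactly when it is overlaid by the core, the core is free, and it still satisfies $\K_p\notin{\cal F}_p$ and $\neq\{K_p\}$. So I may assume ${\cal F}_p$ is free and non-empty, with $\K_p\notin{\cal F}_p$ and ${\cal F}_p\neq\{K_p\}$. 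Fix a minimum-size $F\in{\cal F}_p$. Since $\K_p\notin{\cal F}_p$ we have $|E(F)|\geq1$; and since freeness forces ${\cal F}_p=\{K_p\}$ as soon as $K_p\in{\cal F}_p$, we also have $F\neq K_p$. Hence $F$ possesses at least one edge and at least one non-edge.

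\textbf{Reduction through the structural lemmas.}
The two workhorses are Lemmas~\ref{lem:isol-hard} and~\ref{lem:star-hard}. If some member of ${\cal F}_p$ has both an isolated vertex and a vertex of degree $1$, then ${\cal F}_p$ is hard by Lemma~\ref{lem:isol-hard}, and we are done; so I may assume no member does. Likewise, if ${\cal F}_p$ contains any non-empty subgraph of the star $S_p$ (that is, a star together with isolated vertices), then ${\cal F}_p$ is hard by Lemma~\ref{lem:star-hard}; so I may assume it contains none. These two steps eliminate all the near-forest and near-star members, and in particular every member whose edges concentrate around a single vertex.

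\textbf{The main case and the expected obstacle.}
What remains is the heart of the argument: a free family, with no edgeless member, not equal to $\{K_p\}$, in which no member is a star-plus-isolated graph and no member carries an isolated vertex alongside a degree-$1$ vertex. Here I must build a hyperedge graph $J$ with two shifting non-edges from scratch and appeal to Lemma~\ref{lem:hard}. Minimality supplies the first shifting non-edge for free: for any edge $e$ of $F$ the graph $F-e$ has fewer than $|E(F)|$ edges, so---every member of the free core having at least $|E(F)|$ edges---$F-e$ contains no member of ${\cal F}_p$, and thus $e$ is a shifting non-edge of $J:=F-e$. The genuine difficulty, which I expect to be the crux, is to produce a \emph{second} non-edge $f\neq e$ with $(F-e)+f\in{\cal F}_p$. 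As the example of the four-vertex family whose only member is a triangle together with an isolated vertex already shows, no single edge-swap on $F-e$ need return to ${\cal F}_p$, even though that family is hard (one takes instead $J=P_4$, whose two diagonal non-edges each create a triangle). The plan is therefore to exploit the simultaneous presence of an edge and a non-edge in a minimum $F$, together with a vertex of degree $\geq2$, to engineer a possibly different $J$ admitting two distinct minimal completions into ${\cal F}_p$; the lone configuration in which this fails should be shown to force $F=K_p$, hence ${\cal F}_p=\{K_p\}$, which is excluded. Collecting the cases and invoking Lemma~\ref{lem:hard}, together with Theorem~\ref{thm:equiv}, then gives the $\np$-completeness of {\sc Prescribed Minimum ${\cal F}_p$-Overlay} whenever $\K_p\notin{\cal F}_p$ and ${\cal F}_p\neq\{K_p\}$, which is the content of the statement once its hypotheses are read correctly.
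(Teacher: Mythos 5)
Your reading of the hypotheses is correct: the ``if'' in the statement is evidently a typo for ``unless,'' as confirmed both by the claimed equivalence with Theorem~\ref{thm:gen-1} and by the proof body, which treats $\K_p\in{\cal F}_p$ and ${\cal F}_p=\{K_p\}$ as the polynomial cases. Your outer framing is also sound: membership in \np, passage to the free core, Lemma~\ref{lem:hard} as the hardness engine, and the observation that an edge-minimum member $F$ yields one shifting non-edge of $J=F\setminus e$ for free (your $K_3+\K_1$ example correctly shows this $J$ may have no second one). But there is a genuine gap at exactly the point you flag yourself: you never produce the second shifting non-edge, i.e.\ you never prove that every qualifying free family ${\cal F}_p$ is hard. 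Your fallback plan---exploit an edge, a non-edge and a vertex of degree at least $2$ to engineer some other $J$, with failure ``shown to force $F=K_p$''---is a hope, not an argument; after your eliminations via Lemmas~\ref{lem:isol-hard} and~\ref{lem:star-hard}, the remaining families are still essentially arbitrary, and no uniform construction of a hyperedge graph is given. Note moreover that the paper never asserts (and never needs) that every such ${\cal F}_p$ is a hard set, which suggests the direct route you are attempting is at best delicate and possibly not available in full generality.

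The idea you are missing is that the paper proves Theorem~\ref{thm:gen} by induction on $p$, and the prescribed edges are precisely what make the inductive step work. Let ${\cal F}^-$ be the set of \emph{hypographs} (one-vertex-deleted subgraphs) of members of ${\cal F}_p$. If $\K_{p-1}\in{\cal F}^-$, then some member of ${\cal F}_p$ has all its edges incident to a single vertex, i.e.\ ${\cal F}_p$ contains a non-empty subgraph of the star $S_p$ (non-empty because $\K_p\notin{\cal F}_p$), and Lemma~\ref{lem:star-hard} applies---so in the paper the star lemma is not a preliminary case-elimination, as in your plan, but the \emph{terminal} case of the induction, the only place where a hyperedge graph is ever constructed. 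If ${\cal F}^-=\{K_{p-1}\}$, then ${\cal F}_p=\{K_p\}$, which is excluded. In all remaining cases {\sc Prescribed Minimum ${\cal F}^-$-Overlay} is \np-complete by the induction hypothesis, and it reduces to {\sc Prescribed Minimum ${\cal F}_p$-Overlay}: for each hyperedge $S$ of $H^-$, add a fresh vertex $x_S$, replace $S$ by $X_S=S\cup\{x_S\}$, and \emph{prescribe} all edges $x_Sv$ for $v\in S$. Since $x_S$ is forced to be universal in $X_S$, a graph $G$ overlays ${\cal F}_p$ on the new instance exactly when $G[S]$ contains a member of ${\cal F}^-$ as a spanning subgraph for every $S$, giving $\ov_{{\cal F}_p}(H;E)=\ov_{{\cal F}^-}(H^-;E^-)+(p-1)\cdot|E(H^-)|$. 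Your ``main case'' is exactly what this inductive step dispatches without any structural analysis of ${\cal F}_p$; to repair your proof you would either have to carry out your missing construction in full generality, or adopt this hypograph induction.
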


\begin{proof}
We proceed by induction on $p$, the result holding trivially when $p=1$ and $p=2$. Assume now that $p\geq 3$. Without loss of generality, we may assume that ${\cal F}_p$ is a free set of graphs.

A {\it hypograph} of a graph $G$ is an induced subgraph of $G$ of order $|G|-1$. In other words, it is a subgraph obtained by removing a vertex from $G$.
Let ${\cal F}^-$ be the set of hypographs  of elements of ${\cal F}_p$.

\medskip

If ${\cal F}^-=\{K_{p-1}\}$, then necessarily ${\cal F}_p=\{K_p\}$, and {\sc Prescribed Minimum ${\cal F}_p$-Overlay} is trivially polynomial-time solvable.

\medskip

If ${\cal F}^-\neq \{K_{p-1}\}$ and $\K_{p-1}\notin {\cal F}^-$, then {\sc Prescribed Minimum ${\cal F}^-$-Overlay} is \np-complete by the induction hypothesis.
We shall now reduce this problem to {\sc Prescribed Minimum ${\cal F}_p$-Overlay}.
Let $(H^-,k^-, E^-)$ be an instance of {\sc Prescribed Minimum ${\cal F}^-$-Overlay}.
For every hyperedge $S$ of $H^-$, we create a new vertex $x_S$ and the hyperedge $X_S=S\cup \{x_S\}$.
Let $H$ be the hypergraph defined by $V(H)=V(H^-)\cup \bigcup_{S\in E(H^-)} x_S$ and $E(H)= \{X_S \mid S\in E(H^-)\}$.
We set $E=E^-\cup \bigcup_{S\in E(H^-)} \{x_Sv \mid v\in S\}$. % and $k=k^- + p\times |S|$.

Let us prove that $\ov_{{\cal F}_p}(H;E) = \ov_{{\cal F}^-}(H^-;E^-) + (p-1)\cdot |S|$.
Clearly, if $G^-=(V(H^-), F^-)$ overlays ${\cal F}^-$, then $G=(V(H), F^-\cup \bigcup_{S\in E(H^-)} \{x_Sv \mid v\in S\} )$ overlays ${\cal F}_p$.
Hence $\ov_{{\cal F}_p}(H;E) \leq \ov_{{\cal F}^-}(H^-;E^-) + (p-1)\cdot |S|$.
Reciprocally, assume that $G$ overlays ${\cal F}_p$. Then for each hyperedge $S$ of $H^-$, the graph $G[ X_S] \in {\cal F}_p$, and so
 $G[ S ] \in {\cal F}^-$. Therefore, setting the graph  $G^-=G[ V(H^-)]$ overlays ${\cal F}^-$. Moreover $E(G)\setminus E(G^-)= \bigcup_{S\in E(H^-)} \{x_Sv \mid v\in S\}$. Hence $\ov_{{\cal F}_p}(H;E) \geq \ov_{{\cal F}^-}(H^-;E^-) + (p-1)\cdot |S|$.

\medskip

Assume now that $\K_{p-1}\in {\cal F}^-$. Then ${\cal F}_p$ contains a subgraph of the star $S_p$.
If ${\cal F}_p$ contains $\K_{p}$, then {\sc Prescribed Minimum ${\cal F}_p$-Overlay} is trivially polynomial-time solvable.
Henceforth, we may assume that ${\cal F}_p$ contains a non-empty subgraph of $S_p$.
Thus, by Lemma~\ref{lem:star-hard}, ${\cal F}_p$ is hard, and so by Lemma~\ref{lem:hard},   {\sc Prescribed Minimum ${\cal F}_p$-Overlay} is \np-complete.
\end{proof}

\section{Parameterized analysis}
\label{sec:FPT}

We now focus on the parameterized complexity of our problems. A \emph{parameterization} of a decision problem $Q$ is a computable function $\kappa$ that assigns an integer $\kappa(I)$ to every instance $I$ of the problem. We say that $(Q, \kappa)$ is \emph{fixed-parameter tractable} ($\fpt$) if every instance $I$ can be solved in time $O(f(\kappa(I)) |I|^c)$, where $f$ is some computable function, $|I|$ is the encoding size of $I$, and $c$ is some constant independent of $I$ (we will sometimes use the $O^*(\cdot)$ notation that removes polynomial factors and additive terms).
Finally, the $\W[i]$-hierarchy of parameterized problems is typically used to rule out the existence of $\fpt$ algorithms, under the widely believed assumption that $\fpt \neq \W[1]$. For more details about fixed-parameter tractability, we refer the reader to the monograph of Downey and Fellows~\cite{DoFe13}.\\

Since \textsc{Minimum $\F$-Overlay} is $\np$-hard for most non-trivial cases, it is natural to ask for the existence of $\fpt$ algorithms. In this paper, we consider the so-called \emph{standard parameterization} of an optimization problem: the size of a solution. In the setting of our problems, this parameter corresponds to the number $k$ of edges in a solution. Hence, 
%\ig{on consid\`ere d'autres param\`etres?}
the considered parameter will always be $k$ in the remainder of this section.

Similarly to our dichotomy result stated in Theorem~\ref{thm:main}, we would like to obtain necessary and sufficient conditions on the family $\F$ giving rise to either an $\fpt$ or a $\W[1]$-hard problem. One step towards such a result is the following $\fpt$-analogue of Theorem~\ref{thm:poly}.

\begin{theorem}\label{thm:fpt}
Let $\F$ be a family of graphs. If there is a non-decreasing function $f : \N \rightarrow \N$ such that $\lim_{n \rightarrow + \infty} f(n) = + \infty$ and $|E(F)| \ge f(|V(F)|)$ for all $F \in \F$, then \textsc{Minimum $\F$-Overlay} is $\fpt$.
\end{theorem}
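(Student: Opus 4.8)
The plan is to design a bounded-search-tree algorithm whose branching factor and depth are both bounded by a function of $k$, the hypothesis on $f$ serving only to bound the sizes of the hyperedges we must deal with. First I would isolate the combinatorial content of that hypothesis: for a fixed edge budget, only finitely many hyperedge sizes are relevant. Set $P_k = \{p \in \N : \F_p \text{ contains a graph with at most } k \text{ edges}\}$. If $P_k$ were infinite we could choose $F_n \in \F_{p_n}$ with $p_n \to +\infty$ and $|E(F_n)| \le k$, contradicting $|E(F_n)| \ge f(p_n) \to +\infty$; hence $P_k$ is finite and we may set $p^\star = \max P_k$, a quantity depending only on $k$ and $\F$. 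The key point is that if a graph $G$ overlays $\F$ on $H$ with $|E(G)| \le k$, then for every hyperedge $S$ the induced graph $G[S]$ already contains a spanning member of $\F_{|S|}$, which has at most $|E(G)| \le k$ edges, so $|S| \in P_k$, i.e. $|S| \le p^\star$. Thus the first step of the algorithm scans the hyperedge sizes and answers `No' as soon as one exceeds $p^\star$; from then on every hyperedge has order at most $p^\star$.

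Next I would solve the bounded-size instance by branching. The algorithm maintains a set $E_{\mathrm{cur}}$ of committed edges (initially empty) and, writing $G_{\mathrm{cur}} = (V(H), E_{\mathrm{cur}})$, proceeds recursively: if $|E_{\mathrm{cur}}| > k$ it abandons the branch; if $G_{\mathrm{cur}}$ overlays $\F$ on $H$ it returns `Yes'; otherwise it selects an \emph{unsatisfied} hyperedge $S$ (one for which $G_{\mathrm{cur}}[S]$ has no spanning subgraph in $\F_{|S|}$) and branches over every labelled graph $F'$ on vertex set $S$ that is isomorphic to a member of $\F_{|S|}$, recursing in each branch with committed set $E_{\mathrm{cur}} \cup E(F')$. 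Since the candidates $F'$ correspond to certain subsets of $\binom{S}{2}$, the number of branches is at most $2^{\binom{p^\star}{2}}$, a function of $k$.

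For correctness I would keep the invariant that, on a yes-instance witnessed by an optimal solution $G^\star$ of size at most $k$, some branch always keeps $E_{\mathrm{cur}} \subseteq E(G^\star)$: indeed, if $E_{\mathrm{cur}} \subseteq E(G^\star)$ and $S$ is unsatisfied, then $G^\star[S]$ contains a spanning member of $\F_{|S|}$, so one of the enumerated graphs $F'$ on $S$ satisfies $E(F') \subseteq E(G^\star)$ and the corresponding branch preserves the invariant. Moreover, because $S$ is unsatisfied no enumerated $F'$ is already contained in $G_{\mathrm{cur}}$, so every branch adds at least one new edge; hence $|E_{\mathrm{cur}}|$ strictly increases along every non-abandoned path and the search depth is at most $k$. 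The search tree therefore has at most $\left(2^{\binom{p^\star}{2}}\right)^{k}$ nodes, each requiring only polynomial work together with membership tests for graphs of order at most $p^\star$; since \textsc{$\F$-Recognition} is in $\np$, such a test runs in time bounded by a function of $k$ by trying all polynomial-size certificates. Combining these bounds yields the claimed $\fpt$ running time.

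The hard part is really only the size bound of the first paragraph, which is the unique place where the hypothesis on $f$ is used; once the hyperedges have order bounded by a function of $k$, everything else is a routine bounded search. I would take particular care over the two features that make the branching legitimate — strict progress in every branch (ensured by always branching on an \emph{unsatisfied} hyperedge) and a bounded branching factor (ensured by $|S| \le p^\star$) — and over the observation that membership in $\F$ for graphs of bounded order can be decided within the $\fpt$ budget thanks to the assumption that recognition lies in $\np$.
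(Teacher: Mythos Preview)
Your proof is correct and follows essentially the same strategy as the paper: use the hypothesis on $f$ to bound all hyperedge sizes by a function of $k$ (the paper does this directly via the ``inverse'' $g(k)=\min\{\ell:f(\ell)\ge k\}$, whereas you phrase it through the finite set $P_k$), and then run a bounded-depth branching algorithm on the resulting instance. The paper's proof is much terser---it simply says ``applying a simple branching algorithm'' with running time $O^*(g(k)^{O(k)})$---while you spell out the invariant, the strict-progress argument, and the \np-recognition point; all of this is sound and matches the intended argument.
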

\begin{proof}
Let $g : \N \rightarrow \N$ be the function that maps every $k \in \N$ to the smallest integer $\ell$ such that $f(\ell) \ge k$. Since  $\lim_{n \rightarrow +\infty} f(n) = +\infty$, $g$ is well-defined.
If a hyperedge $S$ of a hypergraph $H$ is of size at least $g(k+1)$, then since $f$ is non-decreasing, $\ov_{\cal F}(H) > k$ and so the instance is negative. Therefore, we may assume that every hyperedge of $H$ has size at most $g(k)$. Applying a simple branching algorithm (see~\cite{DoFe13}) allows us to solve the problem in time $O^*(g(k)^{O(k)})$.
\end{proof}

Observe that if ${\cal F}$ is finite, setting $N=\max \{|E(F)| \mid F\in {\cal F}\}$, the function $f$ defined by $f(n) = 0$ for $n\leq N$ and $f(n) =n$ otherwise satisfies the condition of Theorem~\ref{thm:fpt}, and so  \textsc{Minimum $\F$-Overlay} is $\fpt$. Moreover, Theorem~\ref{thm:fpt} encompasses some interesting graph families. Indeed, if $\F$ is the family of connected graphs (resp. Hamiltonian graphs), then $f(n) = n-1$ (resp. $f(n) = n$) satisfies the required property. Other graph families include $c$-vertex-connected graphs or $c$-edge-connected graphs for any fixed $c \ge 1$, graphs of minimum degree at least $d$ for any fixed $d \ge 1$.
In sharp contrast, we shall see in the next subsection (Theorem~\ref{thm:Wone})
%\ig{next subsection? Next section, n'est-ce pas?} 
that if, for instance, $\F$ is the family of graphs containing a matching of size at least $c$, for any fixed $c \ge 1$, then the problem becomes $\W[1]$-hard 
%\ig{on d\'emontre ca o\`u? Dans le Theorem 7? Il faut pr\'eciser} 
(note that such a graph might have an arbitrary number of isolated vertices).

\subsection{Negative result}
%%%%%%%%%%%%%%

In view of Theorem~\ref{thm:fpt}, a natural question is to know what happens for graph families not satisfying the conditions of the theorem. Although we were not able to obtain an exact dichotomy as in the previous section, we give sufficient conditions on $\F$ giving rise to problems that are unlikely to be $\fpt$ (by proving $\W[1]$-hardness or $\W[2]$-hardness).

An interesting situation is when $\F$ is \emph{closed by addition of isolated vertices}, \ie, for every $F \in \F$, the graph obtained from $F$ by adding an isolated vertex is also in $\F$. Observe that for such a family, \textsc{Minimum $\F$-Overlay} and \textsc{Minimum $\F$-Encompass} are equivalent, which is the reason that motivated us defining this relaxed version. We have the following result, which implies an $\fpt/\W[1]$-hard dichotomy for \textsc{Minimum $\F$-Encompass}.

%\remarkRW{La preuve a ete ecrite sans le relax, donc en utilisant ``G overlays $\F$ on $H$'' blablalba. Il faut modifier si on change}
%\begin{theorem}\label{thm:Wone}
%Let $\F$ be a fixed family of graphs. If $\K_p \in \F$ for some $p \in \N$, then \textsc{Minimum Relaxed $\F$-Overlay} is polynomial. Otherwise, it is $W[1]$-hard parameterized by $k$.
%\end{theorem}
\begin{theorem}\label{thm:Wone}
Let $\F$ be a fixed family of graphs closed by addition of isolated vertices. If $\K_p \in \F$ for some $p \in \N$, then \textsc{Minimum $\F$-Overlay} is $\fpt$. Otherwise, it is $\W[1]$-hard parameterized by $k$.
\end{theorem}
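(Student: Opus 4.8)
The plan is to handle the two directions by quite different means. For the \fpt direction, assume $\overline{K_p}\in\F$ and let $p$ be smallest with this property. Because $\F$ is closed under adding isolated vertices, $\overline{K_q}\in\F$ for every $q\ge p$, so any hyperedge of size at least $p$ is already satisfied by the edgeless graph on its vertices and imposes no constraint. Only hyperedges of size at most $p-1$ remain, and for each such size $q$ the family $\F_q$ is finite and, by minimality of $p$, contains no edgeless graph; hence satisfying such a hyperedge costs at least one edge. I would then run a bounded search tree: while some small hyperedge $S$ is still unsatisfied, branch over the at most $\sum_{q<p}|\F_q|\cdot q!$ ways of placing a member of $\F_{|S|}$ on $S$ and add its missing edges. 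Each branch spends at least one of the $k$ available edges, so the tree has depth at most $k$ and constant branching, yielding an $O^{*}(c^{k})$ algorithm.

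For the $\W[1]$-hardness direction, assume $\F$ is closed under adding isolated vertices with $\overline{K_p}\notin\F$ for all $p$. Let $F^{\star}$ be an edge-minimal member of $\F$, put $s=|E(F^{\star})|\ge 1$, and let $C^{\star}$ be its core (its isolated-free part), on $c$ vertices. The key structural fact is that for every $q\ge|V(F^{\star})|$ one has $\min_{F\in\F_q}|E(F)|=s$, the value $s$ being attained precisely by placing a copy of a minimal core and leaving the remaining vertices isolated; consequently a large hyperedge needs at least $s$ edges, and it is satisfied with exactly $s$ edges if and only if those edges form a minimal core on $c$ of its vertices. I plan to reduce from \textsc{Multicolored Clique} with parameter $t$. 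The construction uses one selection gadget per color class, whose cheapest satisfaction (cost $s$) encodes the choice of a vertex in that class, and one verification gadget per pair of classes, arranged so that the edges placed for two \emph{adjacent} selections already satisfy it, whereas a non-adjacent pair would require extra edges. Every gadget is a single hyperedge padded with private fresh vertices up to size $|V(F^{\star})|$, so that its minimum cost is exactly $s$ and, appearing in no other hyperedge, the padding cannot be recycled to cover other gadgets. Setting the budget $k$ to a function of $t$ of order $t\cdot s$ (essentially the total selection cost), a multicolored clique yields an overlay meeting the budget because all verifications ride for free on the selection edges, and conversely any overlay within budget must satisfy each selection gadget minimally and cover every verification gadget using the selection edges alone, which forces the chosen vertices to be pairwise adjacent.

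The heart of the argument, and its main obstacle, is this reduction, where two points need the most care. First, the gadgets must encode a genuinely discrete choice through the placement of an \emph{arbitrary} minimal core: in contrast with the easy case $C^{\star}=K_2$, a general core may have nontrivial automorphisms and $\F$ may contain several distinct minimal cores, so the selection hyperedges must be designed so that all of their cheapest satisfactions correspond to legitimate vertex choices, and the verification hyperedges so that ``being satisfiable by the selection edges alone'' is equivalent to adjacency. Second, and most delicate, is the matching lower bound: one must show that no overlay beats the budget by reusing edges across gadgets in unintended ways. I expect to establish this through a careful accounting argument that uses the private padding vertices to localize the cost of each gadget, so that the only admissible savings are exactly those realised by a true multicolored clique. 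Since the budget depends only on $t$ and the whole construction is polynomial, this gives the desired $\W[1]$-hardness.
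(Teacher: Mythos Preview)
Your \fpt\ direction is correct and is essentially the paper's argument: once $\overline{K_p}\in\F$ for the minimal such $p$, closure under isolated vertices makes every hyperedge of size $\ge p$ trivially satisfied, and a bounded-depth branching over the (finitely many) ways to realise a member of $\F_q$ on each small hyperedge finishes in $O^{*}(c^{k})$.

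For the $\W[1]$-hardness direction, however, your plan has a genuine gap that you yourself flag but do not close. The obstacle is that a selection hyperedge, being just a large set of vertices, admits as its cheapest satisfactions \emph{every} placement of a minimal core on any $c$ of its vertices; nothing in your sketch forces this placement into the ``slot'' encoding a particular vertex of the colour class. Already for $C^{\star}=K_2$ this means the single edge could join any two vertices of the selection hyperedge, not just a designated pair, and for general cores the freedom is worse (automorphisms, several non-isomorphic minimal cores). The verification idea then breaks: if a core in $S_i$ can sit anywhere, one cannot arrange that ``the selection edges alone satisfy the verification hyperedge'' is equivalent to adjacency in the \textsc{Multicolored Clique} instance. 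Your accounting argument for the lower bound presupposes exactly this control and therefore does not go through as stated.

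The paper takes a different route that sidesteps this difficulty. It reduces from \textsc{Gap$_{\rho}$ Hitting Set}, which is $\W[1]$-hard by the Chen--Lin inapproximability result for parameterized \textsc{Dominating Set}. A set $V_{is}$ of $r_{is}-1$ vertices (one fewer than the minimum number of non-isolated vertices of any $F\in\F$) is forced, via pairwise hyperedges, to carry a clique; each set $S$ of the hitting-set instance becomes a hyperedge containing $V_{is}$ together with blocks of vertices representing the elements of $S$. Since $V_{is}$ alone is too small to host a core, satisfying such a hyperedge forces at least one element-vertex to be non-isolated, and the set of such vertices yields a hitting set. The crucial point is that the converse only gives $\tau\le 2r_{is}k$ rather than $\tau\le k$, because a core placed in a hyperedge may touch several element-vertices; the gap version of \textsc{Hitting Set} absorbs this constant-factor slack. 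This is precisely the mechanism your \textsc{Multicolored Clique} approach lacks: instead of engineering exact control over core placements, the paper allows uncontrolled placements and pays for it with a constant approximation factor that the gap problem tolerates.
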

\begin{proof}
To prove the positive result, let $p$ be the minimum integer such that $\K_p \in \F$. Observe that no matter the graph $G$, for every hyperedge $S \in E(H)$, $G[S]$ will contain $\K_{|S|}$ as a spanning subgraph, which is in $\F$ whenever $|S| \ge p$ (recall that $\F$ is closed by addition of isolated vertices). Then, a simple branching algorithm allows us to enumerate all graphs (with at least one edge) induced by hyperedges of size at most $p-1$ in $O^*(p^{O(k)})$ time.

To prove the negative result, we use a recent result of Chen and Lin~\cite{ChLi16} stating that any constant-approximation of the parameterized \textsc{Dominating Set} is $\W[1]$-hard, which directly transfers to \textsc{Hitting Set}\footnote{Roughly speaking, each element of the universe represents a vertex of the graph, and for each vertex, create a set with the elements corresponding to its closed neighborhood.}.
For an input of \textsc{Hitting Set}, namely a finite set $U$ (called the \emph{universe}), and a family $\S$ of subsets of $U$, let $\tau(U, \S)$ be the minimum size of a set $K \subseteq U$ such that $K \cap S \neq \emptyset$ for all $S \in \S$ (such a set is called a \emph{hitting set}).
The result of Chen and Lin implies that the following problem is $\W[1]$-hard parameterized by $k$.\\

\computationalproblem{Gap$_{\rho}$ Hitting Set}
{A finite set $U$, a family $\S$ of subsets of $U$, and a positive integer $k$.}
{Decide whether $\tau(U, \S) \le k$ or $\tau(U, \S) > \rho k$.}

Let $F_{is}$ be a graph from $\F$ minimizing the two following criteria (in this order): number of non-isolated vertices, and minimum degree of non-isolated vertices. Let $r_{is}$ and $\delta_{is}$ be the respective values of these criteria, $n_{is} = |V(F_{is})|$, and $m_{is} = |E(F_{is})|$. We thus have $\delta_{is} \le r_{is}$.
Let $F_{e}$ be a graph in $\F$ with the minimum number of edges, and $n_e = |V(F_e)|$, $m_e=|E(F_e)|$.

Let $U, \S, k$ be an instance of \textsc{Gap$_{2r_{is}}$ Hitting Set}, with $U = \{u_1, \dots, u_n\}$. We denote by $H$ the hypergraph constructed as follows. Its vertex set is the union of:
\begin{itemize}
	\item[$\bullet$] a set $V_{is}$ of $r_{is}-1$ vertices;
	\item[$\bullet$] a set $V_{U} = \bigcup_{i=1}^n V^i$, where $V^i = \{v^i_1, \dots, v^i_{n_{is}-r_{is}+1}\}$; and
	\item[$\bullet$] for every $u, v \in V_{is}$, $u \neq v$, a set $V_{u, v}$ of $n_{e}-2$ vertices.
\end{itemize}
Then, for every $u, v \in V_{is}$, $u \neq v$, create a hyperedge $h_{u, v} = \{u, v\} \cup V_{u, v}$ and, for every set $S \in \S$, create the hyperedge $h_S =V_{is} \cup \bigcup_{i: u_i \in S} V^i$.
Finally, let $k' = {n_{is}-1 \choose 2} m_e + k \delta_{is}$. Since $\F$ is fixed, $k'$ is a function of $k$ only.

We shall prove that if $\tau(U, \S) \le k$, then $\ov_{\F}(H) \le k'$ and, conversely, if $\ov_{\F}(H) \le k'$, then $\tau(U, \S) \le 2r_{is} k$.

\medskip

Assume first that $U$ has a hitting set $K$ of size at most $k$. For every $u, v \in V_{is}$, $u \neq v$, add to $G$ the edges of a copy of $F_e$ on $h_{u, v}$ with $uv \in E(G)$. This already adds ${n_{is}-1 \choose 2} m_e $ edges to $G$ and, obviously, $G[h_{u, v}]$ contains $F_e$ as a subgraph. Now, for every $u_i \in K$, add all edges between $v^i_1$ and $\delta_{is}$ arbitrarily chosen vertices in $V_{is}$. Observe that for every $S \in \S$, $G[h_S]$ contains $F_{is}$ as a subgraph, and also $|E(G)| \le k'$.

\medskip
Conversely, let $G$ be a solution for \textsc{Minimum $\F$-Overlay} with at most $k'$ edges. Clearly, for all $u, v \in V_{is}$, $u \neq v$, $G[V_{u, v}]$ has at least $m_e$ edges, hence the subgraph of $G$ induced by $V(H) \setminus V_U$ has at least ${n_{is}-1 \choose 2} m_e $ edges, and thus the number of edges of $G$ covered by $V_u$ is at most $k \delta_{is}$. Let $K$ be the set of non-isolated vertices of $V_U$ in $G$, and $K' = \{u_i \mid v^i_j \in K$ for some $j \in \{1, \dots, n_{is}-r_{is}+1\}\}$. We claim that $K'$ is a hitting set of $(U, \S)$: indeed, for every $S \in \S$, $G[h_S]$ must contain some $F \in \F$ as a subgraph, but since $V_{is}$ is composed of $r_{is}-1$ vertices, and since $F_{is}$ is a graph from $\F$ with the minimum number $r_{is}$ of non-isolated vertices, there must exist $i \in \{1, \dots, n\}$ such that $u_i \in S$, and $j \in \{1, \dots, n_{is}-r_{is}+1\}$ such that $v^i_j \in h_S \cap K$, and thus $S \cap K' \neq \emptyset$. Finally, observe that $K$ is a set of non-isolated vertices covering $k \delta_{is}$ edges, and thus $|K| \le 2 k \delta_{is}$ (in the worst case, $K$ induces a matching), hence we have $|K'| \le |K| \le 2 k \delta_{is} \le 2 r_{is} k$, \ie $\tau(U, \S) \le 2 r_{is} k$, concluding the proof.
\end{proof}

It is worth pointing out that the idea of the proof of Theorem~\ref{thm:Wone} applies to broader families of graphs. Indeed, the required property `closed by addition of isolated vertices' forces $\F$ to contain all graphs $F_{is}+\K_i$ (where $+$ denotes the disjoint union of two graphs) for every $i \in \mathbb{N}$.
Actually, it would be sufficient to require the existence of a polynomial $p : \N \rightarrow \N$ such that for any $i \in \N$, we have $F_{is} + \K_{p(i)} \in \F$ (roughly speaking, for a set $S$ of the \textsc{Hitting Set} instance, we would construct a hyperedge with $F_{is}+\K_{p(|S|)}$ vertices).
Intuitively, most families of practical interest not satisfying such a constraint will fall into the scope of Theorem~\ref{thm:fpt}. Unfortunately, we were not able to obtain the dichotomy in a formal way.

Nevertheless, as explained before, this still yields an $\fpt/\W[1]$-hardness dichotomy for the \textsc{Minimum $\F$-Encompass} problem.

\begin{corollary}
Let $\F$ be a fixed family of graphs. If $\K_p \in \F$ for some $p \in \N$, then \textsc{Minimum $\F$-Encompass} is $\fpt$. Otherwise, it is $\W[1]$-hard parameterized by $k$.
\end{corollary}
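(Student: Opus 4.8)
The plan is to derive this corollary directly from Theorem~\ref{thm:Wone}, exploiting the equivalence between \textsc{Minimum $\F$-Encompass} and \textsc{Minimum $\tilde{\F}$-Overlay} established in the introduction, where $\tilde{\F}$ denotes the family of all graphs containing some member of $\F$ as a (non necessarily spanning) subgraph. The key structural observation is that $\tilde{\F}$ is automatically closed by addition of isolated vertices: adding an isolated vertex to a graph preserves every subgraph it already contains, so if $G \in \tilde{\F}$ then $G$ together with an extra isolated vertex still contains the same member of $\F$ as a subgraph, hence also lies in $\tilde{\F}$. This is exactly the hypothesis required to apply Theorem~\ref{thm:Wone} to $\tilde{\F}$.

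With this in hand, the argument splits along the same dichotomy. First I would show that the two branches for $\F$ and for $\tilde{\F}$ coincide, \ie that $\K_p \in \F$ for some $p$ if and only if $\K_q \in \tilde{\F}$ for some $q$. The forward direction is immediate since $\F \subseteq \tilde{\F}$. For the converse, suppose $\K_q \in \tilde{\F}$; then $\K_q$ contains some $F \in \F$ as a subgraph, and since $F$ is a subgraph of a complete graph, $F$ itself must be a complete graph together with some isolated vertices, say $F = \K_p + \overline{K_{q-p}}$ for some $p \le q$. This does not quite give $\K_p \in \F$ directly, so I would instead argue in terms of the positive branch of Theorem~\ref{thm:Wone}: the positive case there only needs that $\tilde{\F}$ contains $\K_p$ for some $p$, which holds precisely when $\F$ contains a graph that is a clique plus isolated vertices. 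The cleanest route is therefore to apply Theorem~\ref{thm:Wone} verbatim to $\tilde{\F}$ and then translate the condition ``$\K_p \in \tilde{\F}$ for some $p$'' back into the statement of the corollary.

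Concretely, I would observe that the condition ``$\K_p \in \F$ for some $p$'' in the corollary should be read as the condition that makes the \textsc{Encompass} problem easy, and that this matches ``$\K_q \in \tilde{\F}$ for some $q$''. Applying Theorem~\ref{thm:Wone} to the family $\tilde{\F}$, which is closed by addition of isolated vertices, yields: if $\K_q \in \tilde{\F}$ for some $q \in \N$, then \textsc{Minimum $\tilde{\F}$-Overlay} is $\fpt$, and otherwise it is $\W[1]$-hard parameterized by $k$. Since \textsc{Minimum $\tilde{\F}$-Overlay} \emph{is} \textsc{Minimum $\F$-Encompass}, the same dichotomy holds for the latter, with the easy condition reading ``$\F$ contains a clique (possibly plus isolated vertices)'', equivalently ``$\K_p \in \tilde{\F}$ for some $p$''.

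The only genuinely delicate point, and the one I would state carefully rather than gloss over, is the exact matching of the branching condition: one must confirm that the phrase ``$\K_p \in \F$ for some $p$'' in the corollary is the correct translation of ``$\K_q \in \tilde{\F}$ for some $q$''. These agree provided one interprets membership up to the closure, \ie $\K_p \in \tilde{\F}$ holds exactly when $\F$ contains some $\K_p$ possibly augmented with isolated vertices. Everything else is a direct invocation of Theorem~\ref{thm:Wone} together with the already-established identity between the \textsc{Encompass} and \textsc{Overlay} problems, so no new reduction is needed; the corollary is essentially Theorem~\ref{thm:Wone} specialized to the families arising as $\tilde{\F}$.
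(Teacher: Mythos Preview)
Your overall strategy is exactly what the paper intends: pass to $\tilde{\F}$, note that $\tilde{\F}$ is closed under addition of isolated vertices, and invoke Theorem~\ref{thm:Wone}; the paper offers no proof of the corollary beyond this.

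Where your write-up goes astray is the translation of the branching condition, and the cause is a misreading of the notation: in this paper $\K_p$ abbreviates $\overline{K_p}$, the \emph{edgeless} graph on $p$ vertices, not a clique. Once this is clear, the equivalence ``$\K_q \in \tilde{\F}$ for some $q$'' $\Leftrightarrow$ ``$\K_p \in \F$ for some $p$'' is immediate and requires none of the hedging in your last two paragraphs. Indeed, if $\K_q \in \tilde{\F}$ then by definition $\K_q$ contains some $F \in \F$ as a subgraph; since $\K_q$ has no edges, $F$ is itself edgeless, so $F = \K_p$ for some $p \le q$ and hence $\K_p \in \F$. The converse is trivial from $\F \subseteq \tilde{\F}$. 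There is no need to speak of ``cliques plus isolated vertices'' or to ``interpret membership up to the closure''; the two conditions coincide on the nose, and the corollary follows directly.
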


We conclude this section with a stronger negative result than Theorem~\ref{thm:Wone}, but concerning a restricted graph family (hence both results are incomparable).

\begin{theorem}\label{thm:Wtwo}
Let $\F$ be a fixed graph family such that:
\begin{itemize}
	\item[$\bullet$]$\F$ is closed by addition of isolated vertices;
	\item[$\bullet$]$\K_p \notin \F$ for every $p \ge 0$; and
	\item[$\bullet$]all graphs in $\F$ have the same number of non-isolated vertices.
\end{itemize}
Then \textsc{Minimum $\F$-Overlay} is $\W[2]$-hard parameterized by $k$.
\end{theorem}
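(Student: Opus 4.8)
The plan is to reduce from the \textsc{Dominating Set} problem, which is the canonical $\W[2]$-hard problem parameterized by solution size. The added structural hypotheses over Theorem~\ref{thm:Wone}---namely that all graphs in $\F$ share the same number $r_{is}$ of non-isolated vertices---are precisely what allow us to avoid the gap/approximation machinery used there and obtain an exact reduction, yielding the stronger $\W[2]$-hardness conclusion. As before, let $F_{is}$ be a graph in $\F$ minimizing first the number of non-isolated vertices (which, by hypothesis, is the common value $r_{is}$ for all of $\F$) and then the minimum degree $\delta_{is}$ of its non-isolated vertices, and let $F_e$ be a graph in $\F$ with the fewest edges $m_e$. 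Note $\K_p \notin \F$ for all $p$ guarantees every graph in $\F$ has at least one edge, so $m_e \ge 1$ and $r_{is} \ge 2$.

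Given an instance $(G, k)$ of \textsc{Dominating Set} with $V(G) = \{u_1, \dots, u_n\}$, I would build a hypergraph $H$ mirroring the construction in the proof of Theorem~\ref{thm:Wone}: a ``frame'' set $V_{is}$ of $r_{is}-1$ vertices, gadget sets $V^i$ (one per vertex $u_i$ of $G$) of size $n_{is}-r_{is}+1$, and padding sets $V_{u,v}$ of $n_e - 2$ vertices to fill out each frame hyperedge $h_{u,v} = \{u,v\} \cup V_{u,v}$ to size $n_e$. The crucial difference lies in the hyperedges encoding the covering constraint: rather than one hyperedge per set $S \in \S$ as in the \textsc{Hitting Set} reduction, I would create, for each vertex $u_i$ of $G$, one hyperedge $h_i = V_{is} \cup \bigcup_{j \in N[i]} V^j$, where $N[i]$ is the closed neighborhood of $u_i$ in $G$. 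The target is $k' = \binom{n_{is}-1}{2} m_e + k \delta_{is}$, again a function of $k$ alone since $\F$ is fixed.

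The forward direction is routine: a dominating set $D$ of size at most $k$ in $G$ yields a solution by placing a copy of $F_e$ on each frame hyperedge $h_{u,v}$ (contributing $\binom{n_{is}-1}{2} m_e$ edges) and, for each $u_j \in D$, attaching $v^j_1$ to $\delta_{is}$ arbitrary frame vertices; since $D$ dominates every $u_i$, every $h_i$ then contains a copy of $F_{is}$ as a subgraph, and the edge count is exactly $k'$. The reverse direction is where the equal-non-isolated-vertex-count hypothesis does the real work, and this is the step I expect to be the main obstacle. Given a solution $G'$ of size at most $k'$, the frame hyperedges force at least $\binom{n_{is}-1}{2} m_e$ edges among $V(H) \setminus V_U$, so at most $k\delta_{is}$ edges touch $V_U$. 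For each $h_i$, the induced subgraph must contain some $F \in \F$; because $|V_{is}| = r_{is}-1$ is one short of the common non-isolated count $r_{is}$, at least one non-isolated vertex of that copy must lie in some $V^j$ with $j \in N[i]$. The set $D = \{u_j : V^j \text{ contains a non-isolated vertex of } G'\}$ is therefore a dominating set of $G$. The delicate point---and the reason this argument gives an exact bound rather than a factor-$2$ loss---is bounding $|D|$ by $k$: I would argue that each non-isolated vertex in $V_U$ must have degree at least $\delta_{is}$ in $G'$ (using the minimality of $\delta_{is}$ together with the fact that a non-isolated vertex of a copy of some $F \in \F$ has degree at least $\delta_{is}$), and that distinct indices $j \in D$ force distinct ``charged'' vertices, so that the $k\delta_{is}$ edge budget caps $|D|$ at $k$; verifying that no single edge is double-counted across two gadget sets, and that the sharing structure forces genuinely disjoint degree contributions, is the calculation requiring the most care. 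Concluding, $\tau_{\textsc{DS}}(G) \le |D| \le k$, establishing the equivalence and hence $\W[2]$-hardness.
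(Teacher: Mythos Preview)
Your construction and forward direction match the paper's (your choice of \textsc{Dominating Set} rather than \textsc{Hitting Set} is immaterial, as the two are parameter-equivalent). The gap is in the reverse direction, precisely at the step you flag as delicate.

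The assertion ``each non-isolated vertex in $V_U$ must have degree at least $\delta_{is}$ in $G'$'' is false as stated: a vertex of $V_U$ can be non-isolated in $G'$ without being a non-isolated vertex of any overlaying copy of some $F\in\F$. If instead you restrict attention to \emph{witness} vertices---for each $h_i$, some $w_i\in V_U\cap h_i$ that is non-isolated in a spanning $F\in\F$ of $G'[h_i]$---then indeed $\deg_{G'}(w_i)\ge\delta_{is}$. But the edge budget of $k\delta_{is}$ on $V_U$ only gives $\sum_{w}\deg_{G'}(w)\le 2k\delta_{is}$, since an edge between two witnesses inside $V_U$ is counted twice; hence you obtain only $|D|\le 2k$, not $|D|\le k$. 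This is exactly the factor that forced the gap reduction in Theorem~\ref{thm:Wone}, so as written your argument does not exploit the extra hypothesis and does not yield an exact reduction.

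The paper closes this with a modification argument your proposal is missing. First, one shows that a minimum solution $G'$ maximizing $|E(G'[V_{is}])|$ has $V_{is}$ a clique: if $uv$ were missing, the edges of $G'[h_{u,v}]$ could be replaced by a copy of $F_e$ containing $uv$ without increasing the size. Second---and this is the key use of the hypothesis that all graphs in $\F$ have exactly $r_{is}$ non-isolated vertices---one modifies $G'$ so that every witness $w_i$ has all its neighbors in $V_{is}$: delete the edges from $w_i$ into $V_U$ and add $\delta_{is}$ edges from $w_i$ into $V_{is}$. Because $V_{is}$ is now a clique on $r_{is}-1$ vertices, every hyperedge $h_{i'}$ containing $w_i$ still induces a supergraph of $F_{is}$ (map the minimum-degree vertex of $F_{is}$ to $w_i$ and the remaining $r_{is}-1$ non-isolated vertices into the clique $V_{is}$), so the overlay property is preserved and the size does not increase. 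After this step the witnesses have pairwise disjoint incident edge sets, all going into $V_{is}$, so $|D|\cdot\delta_{is}\le k\delta_{is}$ and $|D|\le k$ exactly.
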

\begin{proof}%[of Theorem~\ref{thm:Wtwo}]
Let $F_{\delta}$ be a graph from $\F$ minimizing the minimum degree of non-isolated vertices.
Let $\delta$ be such a minimum degree and let $r$ be the number of non-isolated vertices of any graph $F$ of $\F$.
Let $n_{\delta} = |V(F_{\delta})|$ and $m_{\delta} = |E(F_{\delta})|$.
Let $F_{e}$ be a graph from $\F$ with the minimum number of edges, and $n_{e} = |V(F_e)|$, $m_{e} = |E(F_e)|$.

Let $U, \S, k$ be an instance of \textsc{Hitting Set}, with $U = \{u_1, \dots, u_n\}$. We denote by $H$ the hypergraph constructed as follows. Its vertex set is the union of:
\begin{itemize}
	\item[$\bullet$]a set $V_{\delta}$ of $r - 1$ vertices;
	\item[$\bullet$]a set $V_{U} = \bigcup_{i=1}^n V^i$, where $V^i = \{v^i_1, \dots, v^i_{n_{\delta} - r + 1}\}$;
	\item[$\bullet$]for every $u, v \in V_{\delta}$, $u \neq v$, a set $V_{u, v}$ of $n_{e}-2$ vertices.
\end{itemize}
Then, for every $u, v \in V_{\delta}$, $u \neq v$, create the hyperedge $h_{u, v} = \{u, v\} \cup V_{u, v}$, and, for every set $S \in \S$, create a hyperedge $h_S$ composed of $V_{\delta} \cup \bigcup_{i: u_i \in S} V^i$.
Finally, let $k' = {r - 1 \choose 2} m_{e} + k \delta$. Since $\F$ is fixed, $k'$ is a function of $k$ only.

We shall prove that $\tau(U, \S) \leq k$ if and only if $ov_{\F}(H) \le k'$.

\medskip

Assume first that $U$ has a hitting set $K$ of size at most $k$. For every $u, v \in V_{\delta}$, $u \neq v$, add to $G$ the edges of a copy of $F_e$ on $h_{u, v}$ with $uv \in E(G)$. This already adds ${n_{\delta}-1 \choose 2} m_e $ edges to $G$, and, obviously, $G[h_{u, v}]$ contains $F_e$ as a subgraph. Now, for every $u_i \in K$, add all edges between $v^{i}_{1}$ and $\delta$ vertices in $V_{\delta}$ (arbitrarily chosen). Observe that for every $S \in \S$, $G[h_S]$ contains $F_{\delta}$ as a subgraph, and also $|E(G)| \le k'$.

\medskip

Conversely, let $G = (V, E)$ be a solution for \textsc{Minimum $\F$-Overlay} with at most $k'$ edges maximizing $|E(G[V_{\delta}])|$. We claim that $G[V_{\delta}]$ is a clique. If not, let $u, v \in V_{\delta}$, $u \neq v$ such that $uv \notin E(G)$. Since $F_e$ is a graph from $\F$ inducing the minimum number of edges, and since all vertices of $V_{u, v}$ apart from $u$ and $v$ only belong to the hyperedge $h_{u, v}$, removing all edges from $G[V_{\delta}]$ to form a graph isomorphic to $F_e$ with $uv$ being an edge leads to a graph $G'$ with at most $k'$ edges and one more edge induced by $V_{\delta}$, a contradiction.
Then, observe that for every hyperedge $h_S$, there exists $v \in h_S \cap V_U$ such that $|N(v) \cap h_S| \ge \delta$ (recall that $|V_{\delta}| = r-1$). If $N(v) \cap V_U \cap h_S \neq \emptyset$, then remove from $G$ all edges between $v$ and any vertex of $h_S$, and add edges between $v$ and $\delta$ different arbitrarily chosen vertices form $V_{\delta}$. Since $G[V_{\delta}]$ is a clique, all hyperedges $h_{S'}$ containing the removed edges necessarily contain $v$ and thus contain $F_{\delta}$ as a subgraph. Hence this modification leads to a graph $G'$ inducing at most $k'$ edges which overlays $\F$ on $H$ and such that $N(v) \cap V_u \cap h_S = \emptyset$. We apply this rule whenever there exists $v \in h_S \cap V_U$ such that $N(v) \cap V_U \cap h_S \neq \emptyset$ and obtain a solution $G'$ with at most $k'$ edges such that for every hyperedge $h_S$, there exists $v_{j_S}^{i_S} \in h_S \cap V_U$ such that $|N(v^{i_S}_{j_S}) \cap V_{\delta}| = \delta$.
Let $X = \{v_{j_S}^{i_S} \mid S \in \S\}$. We have the following:
\begin{itemize}
	\item[$\bullet$]$X$ is a hitting set of hyperedges $\{h_S \mid S \in \S\}$ and, by construction, the set $X' = \{u_{i_S} \mid S \in \S\}$ is a hitting set of $(U, \S, k)$;
	\item[$\bullet$]since $G'$ has at most $k'$ edges, and $G'[V \setminus V_{U}]$ has ${r - 1 \choose 2} m_{e}$ edges, the number of edges covered by $X$ is at most $k \delta$; and
	\item[$\bullet$]for every $v \in X$, $|N_{G'}(v) \cap V_{\delta}| \geq \delta$.
\end{itemize}
Therefore, $X'$ is a hitting set of $(U, \S)$ of size at most $k$, which concludes the proof.
\end{proof}

Observe that the proof above is very similar to the one of Theorem~\ref{thm:Wone}. However, we could not reduce from the (non-approximated version of) \textsc{Hitting Set} for families $\F$ having different number of non-isolated vertices, for the following informal reasons:
\begin{itemize}
	\item[$\bullet$]the set $V_{\delta}$ must contain no more than $r-1$ vertices, where $r$ is the minimum number of non-isolated vertices of any graph from $\F$ (otherwise, since $V_{\delta}$ is forced to be a clique in any solution, any hyperedge $h_S$ would already contain some graph from $\F$).
	\item[$\bullet$]the graph $F^*$ chosen to be induced by hyperedges $h_S$ must be a graph with $r$ non-isolated vertices with a minimum degree.	
	\item[$\bullet$]it might be the case that $\F$ contains a graph $F'$ with more than $r$ non-isolated vertices but with a minimum degree smaller than the one of $F^*$. Thus, it would be possible to ``cheat'' and  put $F'$ in every hyperedge $h_S$: we would have more than one vertex of this graph in $V_U$ for each hyperedge, but they would cover in total less than $k \delta$ edges (hence we would be able to have a hitting set larger than $k$). However, the number of additional vertices we may win in the hitting set would only be of a linear factor of $k$. This is the reason why the reduction in the proof of Theorem~\ref{thm:Wone} is from the constant approximated version of \textsc{Hitting Set}.
\end{itemize}

\section{Conclusion and future work}

%In this paper we introduced a general problem depending on a fixed graph family $\F$, called \textsc{Minimum $\F$-Overlay}. For some graph families $\F$, we obtain interesting optimization problems of practical interest, such as the \textsc{Minimum Connectivity Inference} problem studied in the context of structural biology or network design.
%We were able to characterize the graph families for which the problem is computationally easy or hard, by proving a Polynomial/$\np$-hard dichotomy result.
%We also investigated the parameterized complexity

Naturally, the first open question is to close the gap between Theorems~\ref{thm:fpt} and \ref{thm:Wone} in order to obtain a complete $\fpt/\W[1]$-hard dichotomy for any family $\F$.
%It would also be interesting to obtain a negative result similar to the one of Theorem~\ref{thm:Wone} not using the approximation version of the \textsc{Hitting Set} problem.

As further work, we are also interested in a more constrained version of the problem, in the sense that we may ask for a graph $G$ such that for every hyperedge $S \in E(H)$, the graph $G[S]$ belongs to $\F$ (hence, we forbid additional edges). The main difference between \textsc{Minimum $\F$-Overlay} and this problem, called \textsc{Minimum $\F$-Enforcement}, is that it is no longer trivial to test for the existence of a feasible solution (actually, it is possible to prove the $\np$-hardness of this existence test for very simple families, \eg when $\F$ only contains $P_3$, the path on three vertices).
We believe that a dichotomy result similar to Theorem~\ref{thm:main} for \textsc{Minimum $\F$-Enforcement} is an interesting challenging question, and will need a different approach than the one used in the proof of Theorem~\ref{thm:gen}.

%\input{input-conclusion.tex}

%\ig{il faudrait uniformiser le format des r\'efs, surtout les confs. J'ai ne pas le courage de le faire maintenant, on pourra le faire plus tard}

%%
%% Bibliography
%%

%% Either use bibtex (recommended),
\bibliographystyle{plain}

\bibliography{refs}

\end{document}
\endinput
%%
%% End of file `squelette-rr.tex'.